\documentclass[showpacs, amsmath, amssymb, onecolumn, notitlepage]{revtex4-2}

\usepackage{appendix}
\usepackage{bbm}
\usepackage{braket}
\usepackage{amsfonts}
\usepackage{amsthm}
\usepackage{amsmath}

\usepackage{txfonts}
\usepackage{bm}
\usepackage[dvipdfmx]{graphicx}
\usepackage{color}
\usepackage{mathrsfs}
\usepackage{subfigure}
\usepackage{wrapfig}
\usepackage{here}
\usepackage{hyperref}
\usepackage{stackengine}
\usepackage{cellspace}
\newcommand{\be}{\begin{equation}}
\newcommand{\ee}{\end{equation}}
\newcommand{\ba}{\begin{align}}
\newcommand{\ea}{\end{align}}

\newtheorem{thm}{Theorem}
\newtheorem{lemma}{Lemma}

\begin{document}

\title{Security of Quantum Conference Key Agreement with Two-Way Classical Communication
}

\author{Shun Kawakami$^{1,2}$, Mori Watanabe$^{3}$, Takuya Ikuta$^{4}$, Koichi Takasugi$^{1}$
}

\affiliation{
$^1$ Network Innovation Laboratories, NTT Inc.,\\ 1-1 Hikari-no-oka, Yokosuka, Kanagawa, 239-0847, Japan \\
$^2$ NTT Research Center for Theoretical Quantum Information, NTT Inc.,\\ \mbox{3-1 Morinosato Wakamiya, Atsugi, Kanagawa,~243-0198, Japan}\\
$^3$ Department of Advanced Materials Science, Graduate School of Frontier Sciences, The University of Tokyo, \mbox{5-1-5 Kashiwanoha, Kashiwa, Chiba, 277-0882, Japan}\\
$^4$ Basic Research Laboratories, NTT Inc.,\\ \mbox{3-1 Morinosato Wakamiya, Atsugi, Kanagawa,~243-0198, Japan}\\
}

\begin{abstract}
Quantum conference key agreement (QCKA) enables multiple users to establish a common secret key with information-theoretic security and is regarded as a key primitive for secure communication in future quantum networks.
However, practical implementations of QCKA typically suffer from higher noise levels than conventional bipartite quantum key distribution (QKD), making the improvement of the tolerable error threshold an important challenge.
Gottesman and Lo proposed two preprocessing procedures for QKD with two-way classical communication, known as the B-step and the P-step, which enhance the tolerable error threshold.
In this paper, we analyze the asymptotic security of QCKA with tripartite GHZ states and two measurement bases using two-way classical communication, including multiple B-steps and P-steps.
We derive the corresponding secure key rate analytically and demonstrate that iterative B-steps can increase the tolerable error threshold beyond $20\%$, significantly improving upon the approximately $11\%$ threshold achievable without two-way classical communication and the approximately $15\%$ threshold obtained with only a single B-step.
Our results show that two-way classical communication can substantially enhance the robustness of practical QCKA protocols.

\end{abstract}

\maketitle

\section{Introduction}
Quantum networks are expected to enable distributed quantum information processing among geographically separated users, providing a foundation for applications such as secure communications, distributed quantum computing, and quantum sensing~\cite{2008Kimble, 2018Wehner}. Among their cryptographic applications, quantum conference key agreement (QCKA) allows multiple users to establish a common secret key with information-theoretic security. As a multipartite extension of quantum key distribution (QKD),
which is one of the most advanced applications in quantum communication,
QCKA is regarded as a key primitive for secure communication in future quantum networks.

Among various approaches to QCKA, protocols based on multipartite Greenberger-Horne-Zeilinger (GHZ) states~\cite{1989GHZ} have attracted particular attention because GHZ states naturally provide the global correlations required for conference key generation. The security of GHZ-based QCKA has been established theoretically \cite{2018Grasselli}, and proof-of-principle demonstrations have been reported~\cite{2021Proietti, 2023Pickston, 2026Zou}. However, current experimental implementations typically exhibit higher noise levels than conventional bipartite QKD systems. Consequently, improving the tolerable error threshold is an important challenge for the practical deployment of QCKA.

In QKD, one of the most successful approaches to increasing the tolerable error threshold is the use of two-way classical communication. Gottesman and Lo proposed and analyzed two preprocessing procedures on sifted keys, known as the B-step and the P-step~\cite{2003Gottesman}. The B-step suppresses bit errors through a process closely related to classical advantage distillation~\cite{1993Maurer}, while the P-step reduces phase errors. Their analysis showed that the error threshold for obtaining a positive key rate can be increased from approximately 11\% to 19\% for the Bennett-Brassard 1984 (BB84) protocol~\cite{1984Bennett} and the Bennett-Brassard-Mermin 1992 (BBM92)  protocol~\cite{1992Bennett},  both of which are representative QKD protocols.

The two-way classical communication was first applied to QCKA using tripartite GHZ states in \cite{2007Chen}. In their analysis, three measurement bases are required to fully characterize the GHZ-diagonal state. From a practical perspective, however, protocols using only two measurement bases are more desirable, similarly to BB84 and BBM92 protocols. The difficulty is that, with only two bases, the GHZ-diagonal parameters are not uniquely determined, making the security analysis more challenging. This gap between theory and practical implementation remained unresolved for many years.

Recently, this problem was addressed in \cite{2025Krawec} and subsequently improved in \cite{2026Krawec}. However, their analysis is restricted to a single B-step, yielding only a limited improvement in the tolerable error threshold. Moreover, the improved key-rate formula in \cite{2026Krawec} is expressed through an optimization problem whose solution is only obtained numerically. 

In this paper, we derive an asymptotic secure key rate for QCKA with tripartite GHZ states and two measurement bases using two-way classical communication, including multiple B-steps and P-steps. In particular, we derive an analytical solution to the optimization problem associated with a single B-step and show that the optimization problem after multiple B-step iterations can be reduced to a recursive sequence of single-step optimizations. This avoids the need to directly optimize the increasingly complicated objective function arising from repeated B-step applications and allows the resulting optimization problem to be solved analytically for any number of B-steps.  Using this result, we demonstrate that the tolerable error threshold can exceed 20\%, representing a substantial improvement over the approximately 11\% threshold achievable without two-way classical communication and the approximately 15\% threshold obtained with only a single B-step. Our results imply that two-way classical communication can substantially improve the robustness of practical GHZ-based QCKA protocols using only two measurement bases. 

This paper is organized as follows. In Sec.~\ref{Sec:protocol}, we describe the QCKA protocol with two-way classical communication and the assumptions used in our security analysis. In Sec.~\ref{Sec: Preliminary}, we formulate the optimization problem for deriving the secure key rate under general B-steps and P-steps. In Sec.~\ref{Sec:singleB}, we analytically solve the optimization problem for a single B-step and identify the solution relevant to positive key generation. In Sec.~\ref{Sec:multiB}, we extend the analysis to multiple B-steps and derive the corresponding secure key rate. In Sec.~\ref{Sec:multiP}, we analyze multiple P-steps and their combination with B-steps. Numerical results demonstrating the improvement of the tolerable error threshold are presented in Sec.~\ref{Sec:Numerical}. Finally, Sec.~\ref{Sec:Conclusion} concludes the paper.

\section{Protocol and assumptions} \label{Sec:protocol}
Here, we introduce a QCKA protocol with a GHZ state using two-way classical communication, in which three legitimate parties, Alice, Bob and Charlie share a secret key. 
In this protocol, two complementary bases ($Z$ and $X$) are used to prepare non-orthogonal states, where the secret key is extracted from the $Z$ basis and the signal disturbance is monitored by using the $X$ basis. The protocol is described as follows. \\

\noindent
(1) {\it State Preparation and distribution}:  
For each round $l \in \{1, \dots, N\}$, a tripartite entangled state is prepared as
\begin{equation}
\label{idealGHZstate}
    \ket{\Psi}_{ABC} = \frac{1}{\sqrt{2}}(\ket{0}_A\ket{0}_B\ket{0}_C + \ket{1}_A\ket{1}_B\ket{1}_C).
\end{equation}
Then, its three subsystems are distributed to Alice, Bob, and Charlie via quantum channels.
\\
(2) {\it Measurement}: 
Alice selects the basis $W_A^{(l)} \in \{Z,X\}$, and accordingly performs the $Z~(X)$-basis measurement $\{\ket{0}\bra{0}_A, \ket{1}\bra{1}_A\}$ ($\{\ket{+}\bra{+}_A, \ket{-}\bra{-}_A\}$) on the qubit to obtain an outcome $b_A^{(l)} \in \{0,1\}$. Here, $\ket{+}_A\coloneqq (\ket{0}_A + \ket{1}_A)/\sqrt{2}$ and $\ket{-}_A\coloneqq (\ket{0}_A - \ket{1}_A)/\sqrt{2}$, corresponding to the outcome 0 and 1, respectively. 
In the same manner, Bob (Charlie) selects the basis $W_B^{(l)}$ ($W_C^{(l)}$) and obtains the outcome $b_B^{(l)}$ ($b_C^{(l)}$). 
\\
(3) {\it Repetition}: Alice, Bob, Charlie repeat steps (1) and (2) $N$ times. 
\\
(4) {\it Public communication}: For each round $l$, Alice, Bob and Charlie publicly announce $W_A^{(l)}$, $W_B^{(l)}$ and $W_C^{(l)}$. 
They keep only the rounds satisfying $W_A^{(l)} = W_B^{(l)} = W_C^{(l)} =Z$ or $W_A^{(l)} = W_B^{(l)} = W_C^{(l)} =X$, and all other rounds are discarded.  
For rounds with $W_A^{(l)} = W_B^{(l)} = W_C^{(l)} =Z$, bits in partially sampled rounds are disclosed. 
For rounds with $W_A^{(l)} = W_B^{(l)} = W_C^{(l)} =X$, all bits are disclosed. 
\\
(5) {\it Parameter estimation}:
By using the disclosed bits, the parties estimate the phase error rate 
\be
\label{defs1}
s_1 = {\rm Pr}(b_A^{(l)} \oplus b_B^{(l)} \oplus b_C^{(l)} =1~|~W_A^{(l)} = W_B^{(l)} = W_C^{(l)} =X), 
\ee
where `$\oplus$' represents exclusive OR.  
They also estimate the bit error rate 
\be
\begin{split}
\label{defs2s3}
s_2 = {\rm Pr}(b_A^{(l)} \neq b_B^{(l)} |~W_A^{(l)} = W_B^{(l)} = W_C^{(l)} = Z), \\
s_3 = {\rm Pr} (b_A^{(l)} \neq b_C^{(l)} |~W_A^{(l)} = W_B^{(l)} = W_C^{(l)} = Z), \\
s_4 = {\rm Pr} (b_B^{(l)} \neq b_C^{(l)} |~W_A^{(l)} = W_B^{(l)} = W_C^{(l)} = Z).
\end{split}
\ee
(6) {\it Two-way classical communications}:
Define an asymptotic key rate $R$ as \cite{2007Chen} 
\be
\label{initialkeyrate}
R \coloneqq 1 - f\cdot\max\{h(s_2),h(s_3)\} -h(s_1),
\ee
where $f~(\geq 1)$ denotes the efficiency of error correction and $h(x) = -x \log(x) - (1-x) \log(1-x)$ is the binary entropy. 
If $R > 0$, this step is skipped. Otherwise, the parties perform classical processing B-step or P-step~\cite{2007Chen}, corresponding to bit-error suppression and phase-error suppression, respectively.  
\begin{description}
\item[$\cdot$ B-step]
The parties randomly group the remaining $Z$-basis rounds into pairs. 
For each pair $(l,m)$:\\
1. Each party computes the parity of their two bits:
$p_A = b_A^{(l)} \oplus b_A^{(m)}$, $p_B = b_B^{(l)} \oplus b_B^{(m)}$, $p_C = b_C^{(l)} \oplus b_C^{(m)}$. \\
2. They publicly compare $(p_A, p_B, p_C)$. \\
3. The pair is kept only if $p_A = p_B = p_C$, otherwise it is discarded. \\
4. From each surviving pair, only one bit $(b_A^{(l)}, b_B^{(l)}, b_C^{(l)})$ is retained as the new key. 
\end{description}
\begin{description}
\item[$\cdot$ P-step]
The parties randomly group the remaining $Z$-basis rounds into blocks of size three. 
For each block $(l,m, n)$:\\
1. Each party computes the parity over the block: 
$\tilde{p}_A = b_A^{(l)} \oplus b_A^{(m)} \oplus b_A^{(n)}$, $\tilde{p}_B = b_B^{(l)} \oplus b_B^{(m)} \oplus b_B^{(n)}$, $\tilde{p}_C = b_C^{(l)} \oplus b_C^{(m)} \oplus b_C^{(n)}$. \\
2. The resulting bits $(\tilde{p}_A, \tilde{p}_B, \tilde{p}_C)$ are defined as the new key. \\
\end{description}
The B-step and P-step update the parameters $s_1$, $s_2$, $s_3$ and $s_4$, of which the expressions are shown in Sec.~\ref{Sec: Preliminary}. \\
(7) {\it Iteration of B- and P-steps }: 
Step (6) is iterated with updated $s_1$, $s_2$, $s_3$ and $s_4$. 
The iteration sequence is optimized depending on the original observed error rates $s_1$, $s_2$, $s_3$ and $s_4$. 
If any sequence is not expected to achieve a positive secure key rate $R>0$, the protocol aborts.
\\
(8) {\it Error correction}: Let $M$ denote the length of the remaining bit sequence. Alice generates a syndrome of length $Mf\cdot\max\{h(s_2),h(s_3)\}$ based on the updated bit error rate $s_2, s_3$ from her sifted key. 
She encrypts this syndrome using a pre-shared secret key with Bob and Charlie and sends it to them~\cite{2006Koashi}. Bob and Charlie then correct their keys according to the decrypted syndrome. 
\\
(9) {\it Privacy amplification}: With the updated phase error rate $s_1$, Alice, Bob and Charlie perform privacy amplification by shortening their keys by $Mh(s_1)$ bits to obtain the final secret key. 
\\\\

For this protocol, we derive a secure key rate in the asymptotic limit \(N \to \infty\) under the following assumptions:
\begin{equation}
s_1, s_2, s_3, s_4 < \frac{1}{2}.
\label{assumptionlessthanhalf}
\end{equation}
In addition, we assume that an eavesdropper Eve identically interacts with distributed qubit states in every round (IID assumption).

\section{Optimization problem in parameter updates}
\label{Sec: Preliminary}
In this section, we briefly introduce the notation and review how the B-step and P-step update the parameters in the GHZ basis, following~\cite{2007Chen}. Based on these expressions, we then state the main optimization problem addressed in this paper.
In the protocol described in the previous section, Alice, Bob and Charlie
wish to share the GHZ state, which has the form 
in Eq.~(\ref{idealGHZstate}). 
The GHZ state is the common eigenstate having eigenvalue of +1 for the following commuting observables,
\begin{equation}
    \begin{split}
        &\hat{S}_{1} = \hat{X} \otimes \hat{X} \otimes \hat{X},\\ 
        &\hat{S}_{2} = \hat{Z} \otimes \hat{Z} \otimes \hat{I},\\
        &\hat{S}_{3} = \hat{Z} \otimes \hat{I} \otimes \hat{Z},\\
    \end{split}
\end{equation}
where  $\hat{I} = \begin{pmatrix}
    1 & 0\\
    0 & 1\\
\end{pmatrix}$, $\hat{X} = \begin{pmatrix}
    0 & 1\\
    1 & 0\\
\end{pmatrix}$, $\hat{Y} = \begin{pmatrix}
    0 & -i\\
    i & 0\\
\end{pmatrix}$, $\hat{Z} = \begin{pmatrix}
    1 & 0\\
    0 & -1\\
\end{pmatrix}$.
In other words, the three observables are the stabilizer generators of the GHZ state.
Additionally, other nontrivial stabilizer elements,
\begin{equation}
    \label{Stabelements}
    \begin{split}
        &\hat{S}_{4} = \hat{I} \otimes \hat{Z} \otimes \hat{Z},\\
        &\hat{S}_{5} = -\hat{Y} \otimes \hat{Y} \otimes \hat{X},\\ 
        &\hat{S}_{6} = -\hat{Y} \otimes \hat{X} \otimes \hat{Y},\\
        &\hat{S}_{7} = -\hat{X} \otimes \hat{Y} \otimes \hat{Y},\\
    \end{split}
\end{equation}
can be obtained from multiplication of the three stabilizer generators.

The GHZ basis is defined by the eigenvalue of these three stabilizer generators,
\begin{equation}
   \ket{\Psi_{p,i_1,i_2}}_{ABC} = \frac{1}{\sqrt{2}}(\ket{0}_A\ket{i_1}_B\ket{i_2}_C + 
   (-1)^p  \ket{1}_A\ket{\overline{i_1}}_B\ket{\overline{i_2}}_C).
\end{equation}
where $p, i_1, i_2$ take values 0 or 1, respectively, and the bar over $i_1$ and $i_2$ indicate
its logical negation. The value 0 for $p, i_1, i_2$ corresponds to the eigenvalue of +1 for
$\hat{S}_1, \hat{S}_2, \hat{S}_3$, while the value 1 for $p, i_1, i_2$ corresponds to the eigenvalue of -1.
Then, we can label a density matrix which is diagonal in the GHZ basis,
\begin{equation}
\hat{\rho}_{ABC} =
\begin{pmatrix}
    p_{000} & 0 & 0 & 0 & 0 & 0 & 0 & 0 \\
    0 & p_{100} & 0 & 0 & 0 & 0 & 0 & 0 \\
    0 & 0 & p_{011} & 0 & 0 & 0 & 0 & 0 \\
    0 & 0 & 0 & p_{111} & 0 & 0 & 0 & 0 \\
    0 & 0 & 0 & 0 & p_{010} & 0 & 0 & 0 \\
    0 & 0 & 0 & 0 & 0 & p_{110} & 0 & 0 \\
    0 & 0 & 0 & 0 & 0 & 0 & p_{001} & 0 \\
    0 & 0 & 0 & 0 & 0 & 0 & 0 & p_{101} \\
\end{pmatrix},
\end{equation}
where $\{p_{p,i_1,i_2}\}$ are the probability amplitudes. 
The off-diagonal terms can be set to zero without loss of generality~\cite{2003Gottesman, 2025Du}, since inserting a GHZ-basis measurement prior to Alice’s, Bob’s, and Charlie’s measurements does not affect the statistics relevant to the virtual entanglement-purification protocol used in the security proof.
This is because the protocol for three-party QCKA with two-way classical communication depends only on the error rates associated with the stabilizer operators $\{\hat{S}_i\}$~\cite{2007Chen}, and these operators commute with the projectors onto the GHZ basis, $\{\ket{\Psi_{p,i_1,i_2}}\bra{\Psi_{p,i_1,i_2}}_{ABC}\}$.
Therefore, the measurement statistics—and hence the evaluated key rate—are completely determined by the GHZ-diagonal components, and the off-diagonal terms do not contribute to the security analysis.




Let $s_i$ be the error rate associated with the measurement of $\hat{S}_i$ for $i=1..7$, defined as the probability of obtaining the outcome -1. 
For $s_1$, $s_2$, $s_3$ and $s_4$, this definition is consistent with Eqs.~(\ref{defs1}) and (\ref{defs2s3}). 
Then, in the asymptotic limit, the error rates $\{s_i\}$ are given in terms of $\{p_{p,i_1,i_2}\}$ as follows:
\begin{equation}
    \label{ptos}
    \begin{split}
        &s_1 = p_{100} + p_{101} + p_{110} + p_{111},\\
        &s_2 = p_{010} + p_{011} + p_{110} + p_{111},\\
        &s_3 = p_{001} + p_{011} + p_{101} + p_{111},\\
        &s_4 = p_{010} + p_{110} + p_{001} + p_{101},\\
        &s_5 = p_{100} + p_{011} + p_{010} + p_{101},\\
        &s_6 = p_{100} + p_{011} + p_{110} + p_{001},\\
        &s_7 = p_{100} + p_{111} + p_{010} + p_{001}.
    \end{split}
\end{equation}
Alternatively, the probability amplitudes can be expressed in terms of the error rates,

\begin{equation}
    \label{ppij}
    \begin{split}
        p_{000} =& 1 - (s_1 + s_2 + s_3 + s_4 + s_5 + s_6 + s_7)/4,\\
        p_{100} =& (s_1 - s_2 - s_3 - s_4 + s_5 + s_6 + s_7)/4,\\
        p_{011} =& (- s_1 + s_2 + s_3 - s_4 + s_5 + s_6 - s_7)/4,\\
        p_{111} =& (s_1 + s_2 + s_3 - s_4 - s_5 - s_6 + s_7)/4,\\
        p_{010} =& (- s_1 + s_2 - s_3 + s_4 + s_5 - s_6 + s_7)/4,\\
        p_{110} =& (s_1 + s_2 - s_3 + s_4 - s_5 + s_6 - s_7)/4,\\
        p_{001} =& (- s_1 - s_2 + s_3 + s_4 - s_5 + s_6 + s_7)/4,\\
        p_{101} =& (s_1 - s_2 + s_3 + s_4 + s_5 - s_6 - s_7)/4.\\
    \end{split}
\end{equation}

Here, we review the B-step and its effect on the probability amplitudes and error rates.
The B-step procedure corresponds to a CNOT measurement by Alice, Bob and Charlie followed
by post selection. Specifically, Alice, Bob and Charlie will perform a CNOT gate between
a pair of their own qubits, measure the target qubit, and keep the control qubit only if the
target qubit measurements were the same for Alice, Bob and Charlie. 
Let $(q, j_1, j_2)$ denote the bit values labeling another GHZ-like state, similar to $(p, i_1, i_2)$.
In terms of this GHZ-basis notation, the two GHZ-like states transform as~\cite{2007Chen}  
\begin{equation}
    [(p, i_1, i_2), (q, j_1, j_2)] \rightarrow 
    [(p \oplus q, i_1, i_2), (q, i_1 \oplus j_1, i_2 \oplus j_2)],
\end{equation}
and the first GHZ state is kept only when $i_1 \oplus j_1 = i_2 \oplus j_2 = 0$.
Then, the original probability amplitudes transform as 
\begin{equation}
    \label{ppijtransform}
    \begin{pmatrix}
        p_{000} \\
        p_{100} \\
        p_{011} \\
        p_{111} \\
        p_{010} \\
        p_{110} \\
        p_{001} \\
        p_{101}
    \end{pmatrix}
\longrightarrow
    \begin{pmatrix}
        p_{000}^2 + p_{100}^2 \\
        2 p_{000} p_{100} \\
        p_{011}^2 + p_{111}^2 \\
        2 p_{011} p_{111} \\
        p_{010}^2 + p_{110}^2 \\
        2 p_{010} p_{110} \\
        p_{001}^2 + p_{101}^2 \\
        2 p_{001} p_{101}
    \end{pmatrix}
/p_{\rm pass},
\end{equation}
where $p_{\rm pass} 
= (p_{000} + p_{100})^{2}
+ (p_{001} + p_{101})^{2}
+ (p_{010} + p_{110})^{2}
+ (p_{011} + p_{111})^{2}$
is the `success' probability in which the first GHZ state is kept.
By applying the relations in Eq.~(\ref{ptos}) and (\ref{ppij}) to the updated $\{p_{p,i_1,i_2}\}$, the B-step updates the error rates $\bm{s}=(s_1,..,s_7)$ as follows:
\be
s_i \to  B_i(\bm{s}),
\label{sBtransform}
\ee
where we define
\begin{equation}
    \label{sB}
    \begin{split}
        B_1(\bm{x}) \coloneqq &
        \frac{
        -x_1^2 + x_2^2 + x_3^2 + x_4^2 - x_5^2  - x_6^2 - x_7^2
        + x_1 - x_2 - x_3 - x_4 + x_5 + x_6 + x_7
        }{
        2p_{\rm pass}
        },\\
        B_2(\bm{x}) \coloneqq &
        \frac{
        x_2^2 + x_3^2 + x_4^2 - 2x_3 x_4
        }{
        2p_{\rm pass}
        },\\
        B_3(\bm{x}) \coloneqq &
        \frac{
        x_2^2 + x_3^2 + x_4^2 - 2x_2 x_4
        }{
        2p_{\rm pass}
        },\\
        B_4(\bm{x}) \coloneqq &
        \frac{
        x_2^2 + x_3^2 + x_4^2 - 2x_2 x_3
        }{
        2p_{\rm pass}
        },\\
        B_5(\bm{x}) \coloneqq &
        \frac{
        x_2^2 + x_3^2 + x_4^2
        - 2x_1 x_5 - 2x_6 x_7
        + x_1 - x_2 - x_3 - x_4 + x_5 + x_6 + x_7
        }{
        2p_{\rm pass}
        },\\
        B_6(\bm{x}) \coloneqq &
        \frac{
        x_2^2 + x_3^2 + x_4^2
        - 2x_1 x_6 - 2x_5 x_7
        + x_1 - x_2 - x_3 - x_4 + x_5 + x_6 + x_7
        }{
        2p_{\rm pass}
        },\\
        B_7(\bm{x}) \coloneqq &
        \frac{
        x_2^2 + x_3^2 + x_4^2
        - 2x_1 x_7 - 2x_5 x_6
        + x_1 - x_2 - x_3 - x_4 + x_5 + x_6 + x_7
        }{
        2p_{\rm pass}
        },\\
    \end{split}
\end{equation}
with 
\be
p_{\rm pass} = 1 + x_2^2 + x_3^2 + x_4^2 - x_2 - x_3 - x_4.
\ee
Note that \(\bm{x}=(x_1,\ldots,x_7)\) denotes a vector of generic variables \(x_i\).

Next, we review the P-step and its effect on the error rates.
Let $(r, k_1, k_2)$ denote the bit values labeling another GHZ-like state, similar to $(p, i_1, i_2)$ and $(q, j_1, j_2)$.
The P-step can be regarded as implementing a three-qubit phase error correction code: if $p \oplus q = p\oplus r = 1 $, we apply $p \to p \oplus 1 $, otherwise keep the first GHZ-like state invariant. 
Under the P-step, the three GHZ states transform as~\cite{2007Chen} 
\be
[(p, i_1, i_2), (q, j_1, j_2), (r, k_1, k_2)]
 \to [(p, i_1 \oplus j_1 \oplus k_1, i_2 \oplus j_2 \oplus k_2), (p \oplus q, j_1, j_2), (p \oplus r, k_1, k_2)].
\ee
Then, the error rates $\bm{s}=(s_1,..,s_7)$ are updated under the P-step as follows:
\be
s_i \to P_i(\bm{s}),
\ee
where we define 
\begin{equation}
\label{sP}
\begin{split}
P_1(\bm{x}) &\coloneqq 3 x_1^2(1-x_1) + x_1^3, \\
P_2(\bm{x}) &\coloneqq 3x_2(1-x_2)^2 + x_2^3, \\
P_3(\bm{x}) &\coloneqq 3x_3(1-x_3)^2 + x_3^3, \\
P_4(\bm{x}) &\coloneqq 3x_4(1-x_4)^2 + x_4^3, \\
P_5(\bm{x}) &\coloneqq 6x_2^2x_5 -3x_2^2 - 6x_2x_5 + 3x_2 -2x_5^3+3x_5^2, \\
P_6(\bm{x}) &\coloneqq 6x_3^2x_6 -3x_3^2 - 6x_3x_6 + 3x_3 -2x_6^3+3x_6^2, \\
P_7(\bm{x}) &\coloneqq 6x_4^2x_7 -3x_4^2 - 6x_4x_7 + 3x_4 -2x_7^3+3x_7^2.
\end{split}
\end{equation}
Note that $P_1(\bm{x})$, $P_2(\bm{x})$, $P_3(\bm{x})$ and $P_4(\bm{x})$ have the same forms as in \cite{2003Gottesman}.

Based on the above, we formulate the main problem as follows. 
Let $\{\tilde{s}_i\}_{i=1..7}$ be error rates after a sequence of B-steps and P-steps. 
As in Eq.~(\ref{initialkeyrate}), the asymptotic secure key rate for the remaining rounds of the protocol is given by 
\begin{equation}
    \label{keyrateorig}
    1 - f\cdot\max\{h(\tilde{s}_2),h(\tilde{s}_3)\} -h(\tilde{s}_1).
\end{equation} 
In this work, only measurements in the $Z$ and $X$ bases are available, and therefore the error rates $s_5, s_6, s_7$, which require $Y$-basis measurements (see Eq.~(\ref{Stabelements})), remain undetermined. 
We therefore consider the worst-case scenario in which these parameters are controlled by Eve. 
From Eqs.~(\ref{sB}) and (\ref{sP}), the functions $B_2(\bm{s})$, $B_3(\bm{s})$, $B_4(\bm{s})$, $P_1(\bm{s})$, $P_2(\bm{s})$, $P_3(\bm{s})$ and $P_4(\bm{s})$ are independent of  $(s_5, s_6, s_7)$, whereas the remaining functions depend on these parameters. Consequently, $\tilde{s}_2$ and $\tilde{s}_3$ are independent of  $(s_5, s_6, s_7)$, while $\tilde{s}_1$ depends on them. 
Accordingly, Eq.~(\ref{keyrateorig}) shows that the main problem is to determine the physically realizable values of $(s_5, s_6, s_7)$ that maximize $h(\tilde{s}_1(s_5,s_6,s_7))$. 
The lower bound on the secure key rate is then given by
\be
\label{keyratepre}
R = 1 - f\cdot\max\{h(\tilde{s}_2),h(\tilde{s}_3)\} -\max\limits_{s_5,s_6,s_7} h(\tilde{s}_1(s_5,s_6,s_7)).
\ee

We discuss the constraints that $(s_5, s_6, s_7)$ must satisfy
in order to be physically realizable. Specifically, these constraints arise from the definition of
the probability amplitudes $p_{p,i_1,i_2}$, which must be non-negative and normalized.
From Eq.~(\ref{ppij}), the normalization condition is automatically satisfied, since the contributions from the $s_i$ cancel out when summing over all $p_{p,i_1,i_2}$. Then, the constraints on $s_5, s_6, s_7$ are solely determined by the non-negativity conditions $p_{p,i_1,i_2} \geq 0$, 
expressed in terms of the parameters $s_i$:

\begin{align}
    p_{011} \geq 0: \quad
    & s_5 + s_6 - s_7 \ge s_1 - s_2 - s_3 + s_4,\label{zero1}\\
    p_{111} \geq 0: \quad
    & s_5 + s_6 - s_7 \le s_1 + s_2 + s_3 - s_4,\label{zero2}\\
    p_{010} \geq 0: \quad
    & s_5 - s_6 + s_7 \ge s_1 - s_2 + s_3 - s_4,\label{zero3}\\
    p_{110} \geq 0: \quad
    & s_5 - s_6 + s_7 \le s_1 + s_2 - s_3 + s_4,\label{zero4}\\
    p_{001} \geq 0: \quad
    & -s_5 + s_6 + s_7 \ge s_1 + s_2 - s_3 - s_4,\label{zero5}\\
    p_{101} \geq 0: \quad
    & -s_5 + s_6 + s_7 \le s_1 - s_2 + s_3 + s_4,\label{zero6}\\
    p_{100} \geq 0: \quad
    & s_5 + s_6 + s_7 \ge -s_1 + s_2 + s_3 + s_4,\label{zero7}\\
    p_{000} \geq 0: \quad
    & s_5 + s_6 + s_7 \le 4 - (s_1 + s_2 + s_3 + s_4),\label{zero8} 
\end{align}
or in a slightly simpler form,
\begin{align}
    s_1 - (+s_2 + s_3 - s_4)
    &\le +s_5 + s_6 - s_7
    \le s_1 + (+s_2 + s_3 - s_4),\label{bound1}\\
    s_1 - (+s_2 - s_3 + s_4)
    &\le +s_5 - s_6 + s_7
    \le s_1 + (+s_2 - s_3 + s_4),\label{bound2}\\
    s_1 - (-s_2 + s_3 + s_4)
    &\le -s_5 + s_6 + s_7
    \le s_1 + (-s_2 + s_3 + s_4),\label{bound3}\\
    - s_1 + s_2 + s_3 + s_4
    &\le s_5 + s_6 + s_7
    \le 4 - (s_1 + s_2 + s_3 + s_4).\label{bound4}
\end{align}
Note that $p_{p, i_2, i_2} \leq 1$ because,
if $^\exists p_{p, i_1, i_2} > 1$,
the normalization condition cannot be satisfied for non-negative $p_{p, i_2, i_2}$.

\begin{figure}
 \centering
 \includegraphics[scale=0.3]
      {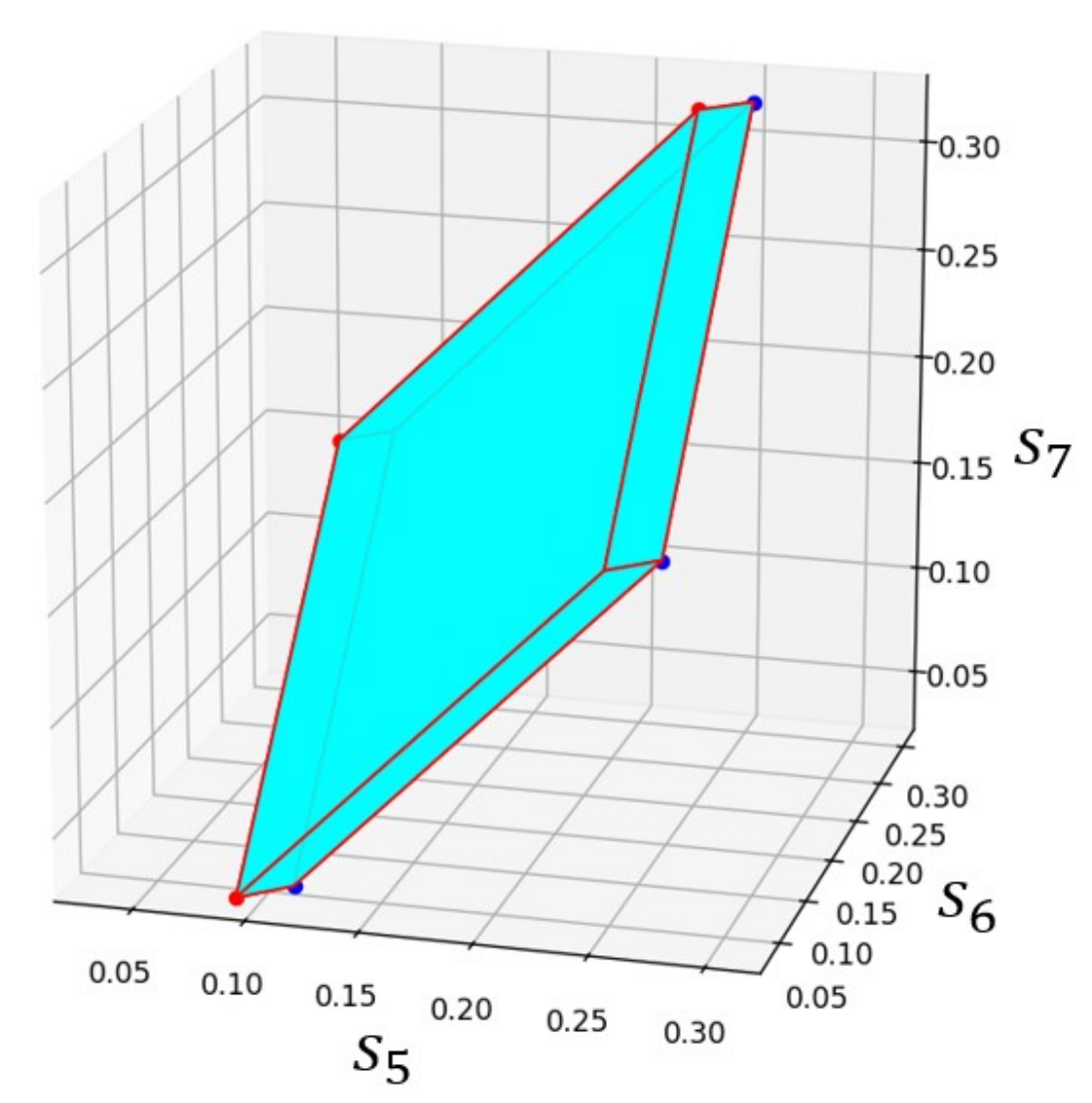}
 \caption{ An example of the physically realizable region \(T_{\rm phys}\) in the \((s_5,s_6,s_7)\) parameter space. Eqs.~(\ref{bound1})--(\ref{bound3}) define the interior of the parallelepiped. In this example, the boundary corresponding to Eq.~(\ref{bound4}) lies outside the parallelepiped.}
 \label{parallelopiped_fig}
\end{figure}


The bounds Eqs.~(\ref{bound1})--(\ref{bound4}) also imply conditions on $s_2, s_3$ and $s_4$.
That is, satisfying the following equations
are necessary conditions for any physically realizable $s_i$,
\begin{align}
    &0\le +s_2 + s_3 - s_4,\label{bound1_s1235}\\
    &0\le +s_2 - s_3 + s_4,\label{bound2_s1235}\\
    &0\le -s_2 + s_3 + s_4,\label{bound3_s1235}\\
    &s_2 + s_3 + s_4\le 2.\label{bound4_s1235}
\end{align}
Otherwise, there exists no $(s_5,s_6,s_7)$ that satisfies the physical constraints in Eqs.~(\ref{bound1})--(\ref{bound4}). 
In other words, Eqs.~(\ref{bound1_s1235})--(\ref{bound4_s1235}) are satisfied as long as  input states are physical.

Let $T_{\rm phys}$ denote the region of $(s_5,s_6,s_7)$ satisfying Eqs.~(\ref{bound1})--(\ref{bound4}). 
Eqs.~(\ref{bound1})--(\ref{bound3}) represent a parallelepiped in the $s_5, s_6, s_7$ 
parameter space as shown in Fig.~\ref{parallelopiped_fig}, with two additional bounds from Eq.~(\ref{bound4}). 
The optimization problem in Eq.~(\ref{keyratepre}) is then formulated as follows:
\be
\label{keyrate}
R = 1 - f\cdot\max\{h(\tilde{s}_2),h(\tilde{s}_3)\} -\max\limits_{(s_5,s_6,s_7) \in T_{\rm phys}} h(\tilde{s}_1(s_5,s_6,s_7)).
\ee

This optimization problem is similar to that considered for the BB84 or BBM92 protocols in  \cite{2003Gottesman}. 
The key difference is that, whereas only a single parameter is optimized in  \cite{2003Gottesman}, 
here three parameters must be optimized simultaneously. 
In the following section, we begin with the analysis of a single B-step, since $\tilde{s}_1$ is independent of $(s_5, s_6, s_7)$ when a single P-step is applied.

\section{Single B-step}
\label{Sec:singleB}

In this section, we derive an analytical solution to the optimization problem under the assumption that a single B-step is applied in the protocol. 
We present the objective function and the physical constraints, and solve the resulting convex optimization problem by exploiting its geometric structure.  
We show that there exists only eight distinct analytical solutions to the
optimization problem, depending on the initial error parameters, under the assumption of $s_1, s_2, s_3, s_4<1/2$ in Eq.~(\ref{assumptionlessthanhalf}). 

\subsection{Convex optimization formulation for secure key rate analysis}
For a single B-step, the secure key rate Eq.~(\ref{keyrate}) is 
\begin{equation}
    \label{keyrateB}
    R = 1 - f\cdot\max\{h(s_2^{(1)}),h(s_3^{(1)})\} - \max\limits_{(s_5,s_6,s_7) \in T_{\rm phys}} h(s_1^{(1)}(s_5,s_6,s_7)),
\end{equation}
where we define 
\be
s_1^{(1)}(s_5,s_6,s_7)= B_1(\bm{s}),~~s_2^{(1)}=B_2(\bm{s}),~~s_3^{(1)}=B_3(\bm{s}).
\label{def:s1s2s3forsingleB}
\ee  
Now we optimize the parameters $(s_5, s_6, s_7)$ to maximize $h(s_1^{(1)}(s_5,s_6,s_7))$. 
First, from the expressions in Eq.~(\ref{sB}), we observe that $s_1^{(1)}(s_5,s_6,s_7)$ is isotropic
with respect to $(s_5, s_6, s_7)$, and that its level surface forms a sphere in the $(s_5, s_6, s_7)$ parameter space. 
For $i=5,6,7$, the partial derivatives of $s_1^{(1)}(s_5,s_6,s_7)$ are 
\begin{equation}
    \label{partialdiff}
    \frac{\partial s_1^{(1)}(s_5,s_6,s_7)}{\partial s_i} = \frac{1-2s_i}{2p_{\rm pass}},
    ~~\frac{\partial^2 s_1^{(1)}(s_5,s_6,s_7)}{\partial s_i^2} = \frac{-2}{2p_{\rm pass}}.
\end{equation}
Since the denominator $2 p_{\rm pass}$ is
always positive, Eq.~(\ref{partialdiff}) indicates that $s_1^{(1)}(s_5,s_6,s_7)$ is maximized when
$s_5 = s_6 = s_7 = 1/2$.
Thus,
if we ignore the constraints in Eqs.~(\ref{bound1})--(\ref{bound4}),
the maximum value of $s_1^{(1)}(s_5,s_6,s_7)$ is obtained as

\begin{equation}
    \label{maxs1dash}
    s_1^{(1)}(1/2,1/2,1/2) = \frac{s_1(1 -s_1) + (\frac{3}{4} + s_2^2 + s_3^2 + s_4^2 - s_2 - s_3 - s_4)}
    {2 p_{\rm pass}}
    = \frac{s_1(1 -s_1) -\frac{1}{4}}{2p_{\rm pass}} + \frac{1}{2}
    . 
\end{equation}
Since $s_1<1/2$, we have $s_1(1-s_1)<1/4$, and therefore the maximum value is bounded as 
\be
s_1^{(1)}(1/2,1/2,1/2)<1/2.
\ee
Hence, 
\be
s_1^{(1)}(s_5,s_6,s_7) = B_1(\bm{s}) < 1/2
\label{B1supperbound}
\ee
holds for $s_1 < 1/2$. 
Since the binary entropy function $h(x)$ is monotonically
increasing on $[0,1/2]$,
maximizing $h(s_i^{(1)}(s_5,s_6,s_7))$ in Eq.~(\ref{keyrateB}) reduces to  
\be
\max\limits_{(s_5,s_6,s_7) \in T_{\rm phys}} s_1^{(1)}(s_5,s_6,s_7).
\label{maxs1problem}
\ee
Given the above observations, namely, \\
1. the spherical nature of the level surface of $s_i^{(1)}(s_5,s_6,s_7)$
with respect to $(s_5, s_6, s_7)$, \\
2. the fact that the worst is attained at $s_5 = s_6 = s_7 = 1/2$, \\ 
the solution to the optimization problem can be rephrased as follows:\\
\\
{\it The point $(s_5,s_6,s_7)$ that maximizes $s_i^{(1)}(s_5,s_6,s_7)$ is the point closest to $s_5 = s_6 = s_7 = 1/2$ within the region $T_{\rm phys}$.}\\
\\

In the next section, we derive analytical solutions of
the worst-case values of $(s_5, s_6, s_7)$ for the optimization problem in Eq. (\ref{maxs1problem}), i.e., the analytical solutions to the closest point to $s_5 = s_6 = s_7 = 1/2$ within the region $T_{\rm phys}$.

\subsection{Analytical solutions to the worst case $(s_5, s_6, s_7)$}
\label{subsection:analyticalsolutions}

In this section, we derive eight analytical solutions for the worst-case values of \((s_5,s_6,s_7)\), depending on the measurable parameters \(s_1,s_2,s_3,\) and \(s_4\).
We begin by analyzing the optimization problem in Eq.~(\ref{maxs1problem}) under the assumption that only the three constraints Eqs.~(\ref{zero2}), (\ref{zero4}), and (\ref{zero6}) are active among Eqs.~(\ref{zero1})--(\ref{zero8}):
\begin{equation}
\max_{(s_5,s_6,s_7)\in T_{\rm pyramid}}
s_1^{(1)}(s_5,s_6,s_7),
\label{problemTpyramid}
\end{equation}
where \(T_{\rm pyramid}\) denotes the region in the \((s_5,s_6,s_7)\) space satisfying Eqs.~(\ref{zero2}), (\ref{zero4}), and (\ref{zero6}). 
By its definition, the region $T_{\rm pyramid}$ includes the original physically realizable region $T_{\rm phys}$, namely,
\be
T_{\rm phys} \subseteq T_{\rm pyramid}.
\ee
One representative analytical solution to Eq.~(\ref{problemTpyramid}) is derived explicitly in the main text, while the remaining seven solutions are presented in Appendix~\ref{Appe:othercases}. 
We then show that the obtained analytical solutions automatically satisfy the remaining constraints, Eqs.~(\ref{zero1}), (\ref{zero3}), (\ref{zero5}), (\ref{zero7}), and (\ref{zero8}), regardless of the values of \(s_1,s_2,s_3,\) and \(s_4\). The verification of these remaining constraints is provided in Appendix~\ref{Appe:constraints}. As a consequence, these solutions are also solutions to the optimization problem in Eq.~(\ref{maxs1problem}).

Geometrically, the three constraints in Eqs.~(\ref{zero2}), (\ref{zero4}), and (\ref{zero6}) correspond to three faces spanning the $(s_5, s_6, s_7)$
parameter space, where its position depends on the measurable parameters $s_1, s_2, s_3$ and $s_4$.
As the faces are not parallel, the region $T_{\rm pyramid}$ corresponds to  a triangular pyramid 
in the $(s_5, s_6, s_7)$ parameter space as shown in Fig.~\ref{pyramid2_fig}. The analytical solutions to the worst-case 
$(s_5, s_6, s_7)$ is then equivalent to the point within $T_{\rm pyramid}$ which is
geometrically closest to $s_5 = s_6 = s_7 = 1/2$.

The eight cases can be divided into four situations. That is, the one case where the worst point lies on the corner of $T_{\rm pyramid}$, the three cases where the worst point lies on one of the three edges, the three cases where the worst point lies on one of the three faces of $T_{\rm pyramid}$, 
 and the one case where the worst point lies
within $T_{\rm pyramid}$.
In order to express the eight solutions and their conditions in detail, we define relevant vectors.
First,
we use the parameter space
in the coordinate system of $(s_5, s_6, s_7)$. 
The faces of $T_{\rm pyramid}$ are determined by the equality conditions corresponding to Eqs.~(\ref{zero2}), (\ref{zero4}), and (\ref{zero6}), namely,
\begin{align}
    & s_5 + s_6 - s_7 = s_1 + s_2 + s_3 - s_4,\label{surface_e3}\\
    & s_5 - s_6 + s_7 = s_1 + s_2 - s_3 + s_4,\label{surface_e2}\\
    & -s_5 + s_6 + s_7 = s_1 - s_2 + s_3 + s_4.\label{surface_e1}
\end{align}
The coordinates of the corner are obtained by simultaneously solving Eqs.~(\ref{surface_e3})--(\ref{surface_e1}). The corner of the pyramid is therefore given by the point \(\bm{S}\),
\begin{equation}
    \bm{S} =
    \begin{pmatrix}
        s_1 + s_2\\
        s_1 + s_3\\
        s_1 + s_4
    \end{pmatrix}.
\end{equation}
Additionally, we define the point $\bm{P} = (1/2, 1/2, 1/2)^\mathrm{T}$, and denote $\bm{P}'$ as the worst-case point of $(s_5, s_6, s_7)$, which is equivalent to the point closest to $\bm{P}$ within the
region defined by the pyramid. These points are shown in Fig.~\ref{pyramid2_fig} (a).

\begin{figure}
 \centering
 \includegraphics[scale=0.7]
    {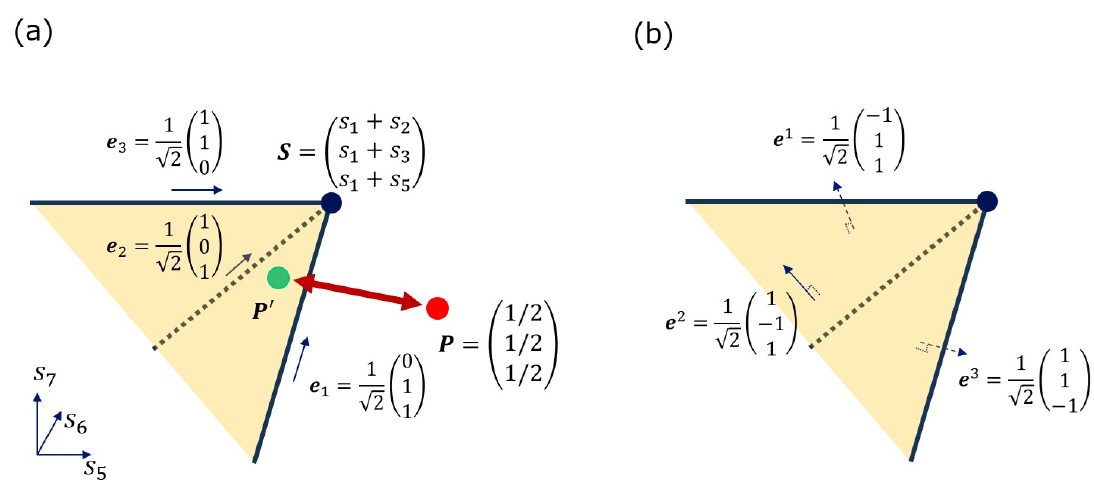}
    \caption{The triangular pyramid $T_{\rm pyramid}$ defined in the $s_5, s_6, s_7$ parameter space by the constraints
    Eq.~(\ref{zero2}), Eq.~(\ref{zero4}), Eq.~(\ref{zero6}). 
    The corner of the pyramid corresponds to the upper-right corner of the parallelepiped in Fig.~\ref{parallelopiped_fig}. 
    (a) The edge unit vectors (covariant vectors) $\{\bm{e}_i\}$ , the peak point $\bm{S}$, the worst point $\bm{P}$ and the worst point $\bm{P}'$ within $T_{\rm pyramid}$. (b) The contravariant vectors $\{\bm{e}^i\}$.
}
 \label{pyramid2_fig}
\end{figure}

The signs of $s_5, s_6, s_7$ in Eqs.~(\ref{surface_e3})--(\ref{surface_e1}) give the normal vectors of the corresponding
faces, therefore the outward normal vectors from the interior of $T_{\rm pyramid}$ without normalization are
$\bm{n}^1 = (-1, 1, 1)^{\mathrm{T}},\bm{n}^2 = (1, -1, 1)^{\mathrm{T}},\bm{n}^3 = (1, 1, -1)^{\mathrm{T}}$
for Eq.~(\ref{surface_e1}), Eq.~(\ref{surface_e2}), Eq.~(\ref{surface_e3}), respectively.
The unit vectors of the edges can be obtained by taking the cross product of the corresponding face normals.
For example, for the edge corresponding to the
intersection of Eqs.~(\ref{surface_e3}) and (\ref{surface_e2}) 
(i.e., the edge corresponding to $\bm{e_1}$ in Fig.~\ref{pyramid2_fig} (a)), the edge vector should be  proportional to a vector perpendicular to 
the normal vectors of Eqs.~(\ref{surface_e3}) and (\ref{surface_e2}),
\begin{equation}
    \begin{pmatrix}
        1\\
        -1\\
        1\\
    \end{pmatrix}
    \times
    \begin{pmatrix}
        1\\
        1\\
        -1\\
    \end{pmatrix}
    =
    \begin{pmatrix}
        0\\
        2\\
        2\\
    \end{pmatrix},
\end{equation}
and likewise for the other two edges.
Then, the edge unit vectors pointing toward the corner $\bm{S}$ can be defined as 
\begin{align}
    \bm{e}_1 =
    \frac{1}{\sqrt{2}}
    \begin{pmatrix}
        0\\
        1\\
        1\\
    \end{pmatrix},~~
    \bm{e}_2 =
    \frac{1}{\sqrt{2}}
    \begin{pmatrix}
        1\\
        0\\
        1\\
    \end{pmatrix},~~
    \bm{e}_3 =
    \frac{1}{\sqrt{2}}
    \begin{pmatrix}
        1\\
        1\\
        0\\
    \end{pmatrix},\label{edgeunitvector3}
\end{align}
 such that $\bm{e}_i \cdot \bm{n}^i >0$ and $|\bm{e}_i| = 1$.
These are shown in Fig.~\ref{pyramid2_fig} (a). 
Finally, we will define the contravariant vectors, $\bm{e}^i$, which take the form,
\begin{align}
    \bm{e}^1 =
    \frac{1}{\sqrt{2}}
    \begin{pmatrix}
        -1\\
        1\\
        1\\
    \end{pmatrix},~~
    \bm{e}^2 =
    \frac{1}{\sqrt{2}}
    \begin{pmatrix}
        1\\
        -1\\
        1\\
    \end{pmatrix},~~
    \bm{e}^3 =
    \frac{1}{\sqrt{2}}
    \begin{pmatrix}
        1\\
        1\\
        -1\\
    \end{pmatrix},
\end{align} 
shown in Fig.\ref{pyramid2_fig} (b). The contravariant vectors
are parallel to the normal vectors of the corresponding faces,
and the covariant (edge) and contravariant vectors satisfy the following relation,

\begin{equation}
     \bm{e}^i \cdot \bm{e}_j = \bm{e}_i \cdot \bm{e}^j = \delta_{ij},
\end{equation}
 where $\delta_{ij}$ is the Kronecker delta.
It should be noted that, in order to satisfy the above relation, the contravariant
vector is not a unit vector where $|\bm{e}^i|^2 = \frac{3}{2}$. 
The covariant and contravariant vectors also satisfy the following relations,
\begin{align}
    \bm{e}_i \cdot \bm{e}_j &= \frac{1}{2},\\
    \bm{e}^i \cdot \bm{e}^j &= -\frac{1}{2}.
\end{align}

In the following analysis, the origin of the coordinates is taken as $\bm{S}$.
Then, the point $\bm{P}$ can be expressed in terms of the covariant and contravariant vectors as
\begin{equation}
\begin{split}
    \bm{SP} &= \left(\frac{1}{2} - s_1 -s_2, \frac{1}{2} - s_1 -s_3, \frac{1}{2} - s_1 -s_4\right)^\mathrm{T}\\
    &= v^1\bm{e}_1 + v^2\bm{e}_2 + v^3\bm{e}_3\\
    &= v_1\bm{e}^1 + v_2\bm{e}^2 + v_3\bm{e}^3,\\
\end{split}
\end{equation}
where $v^i, v_j$ are coefficients. The expressions of $v^i, v_j$ can be obtained by
multiplication of the appropriate vector and $\bm{SP}$ as,

\begin{equation}
\label{e^1}
\begin{split}
    \bm{e}^1 \cdot \bm{SP} &= v^1\bm{e}^1\cdot\bm{e}_1 + v^2\bm{e}^1\cdot\bm{e}_2 + v^3\bm{e}^1\cdot\bm{e}_3\\
    &= v^1 (\because \bm{e}^i\cdot\bm{e}_j=\delta_{ij})\\
    &= \frac{1}{\sqrt{2}} (-1,1,1) \cdot \left(\frac{1}{2} - s_1 -s_2, \frac{1}{2} - s_1 -s_3, \frac{1}{2} - s_1 -s_4\right)^\mathrm{T}\\
    &= \frac{1}{\sqrt{2}} \left(\frac{1}{2} -s_1 +s_2 -s_3 -s_4\right),
\end{split}
\end{equation}

\begin{equation}
\label{e_1}
\begin{split}
    \bm{e}_1 \cdot \bm{SP} &= v_1\bm{e}_1\cdot\bm{e}^1 + v_2\bm{e}_1\cdot\bm{e}^2 + v_3\bm{e}_1\cdot\bm{e}^3\\
    &= v_1 (\because \bm{e}_i\cdot\bm{e}^j=\delta_{ij})\\
    &= \frac{1}{\sqrt{2}} (0, 1, 1) \cdot  \left(\frac{1}{2} - s_1 -s_2, \frac{1}{2} - s_1 -s_3, \frac{1}{2} - s_1 -s_4\right)^\mathrm{T}\\
    &= \frac{1}{\sqrt{2}} (1 -2s_1 -s_3 -s_4).\\
\end{split}
\end{equation}
The other $v^i, v_j$ can be obtained using similar methods as
\begin{align}
    v^2 &= \frac{1}{\sqrt{2}} \left(\frac{1}{2} -s_1 -s_2 +s_3 -s_4\right),\label{e^2}\\
    v^3 &= \frac{1}{\sqrt{2}} \left(\frac{1}{2} -s_1 -s_2 -s_3 +s_4\right),\label{e^3}\\
    v_2 &= \frac{1}{\sqrt{2}} (1 -2s_1 -s_2 -s_4),\label{e_2}\\
    v_3 &= \frac{1}{\sqrt{2}} (1 -2s_1 -s_2 -s_3).\label{e_3}
\end{align}
Using these coordinates, the conditions and solutions for the 
eight separate cases discussed above can be derived.
 In the following, we present the analytical solution only for the corner case, since only this case yields a positive key rate, as numerically confirmed in Sec.~\ref{Sec:singleBstepNumerical}. The remaining seven solutions corresponding to edge and face cases are shown in Appendix~\ref{Appe:othercases}.  
The results for all cases are summarized in Table~\ref{Table:solutions}.

\begin{table}[htbp]
\centering
\begin{tabular}{Sc|Sc|Sc}
\hline
Case & Conditions for $s_1,s_2,s_3,s_4$ & Solutions $(s_5,s_6,s_7)$ \\
\hline

1
&
$\displaystyle
\max\{(s_2+s_3),(s_3+s_4),(s_2+s_4)\} \le 1-2s_1
$
&
$(s_1+s_2,\, s_1+s_3,\, s_1+s_4)$
\\

\hline

2
&
$\displaystyle
\begin{aligned}
1 &< 2s_1+s_3+s_4,\\
|s_3-s_4| &\le 1-2s_1-2s_2
\end{aligned}
$
&
$\displaystyle
\left(
s_1+s_2,
\frac{1}{2}(s_3-s_4+1),
\frac{1}{2}(-s_3+s_4+1)
\right)
$
\\

\hline

3
&
$\displaystyle
\begin{aligned}
1 &< 2s_1+s_2+s_4,\\
|s_2-s_4| &\le 1-2s_1-2s_3
\end{aligned}
$
&
$\displaystyle
\left(
\frac{1}{2}(s_2-s_4+1),
s_1+s_3,
\frac{1}{2}(-s_2+s_4+1)
\right)
$
\\

\hline

4
&
$\displaystyle
\begin{aligned}
1 &< 2s_1+s_2+s_3,\\
|s_2-s_3| &\le 1-2s_1-2s_4
\end{aligned}
$
&
$\displaystyle
\left(
\frac{1}{2}(s_2-s_3+1),
\frac{1}{2}(-s_2+s_3+1),
s_1+s_4
\right)
$
\\

\hline

5
&
$\displaystyle
\begin{aligned}
s_1+(-s_2+s_3+s_4) &\le \frac{1}{2},\\
1-2s_1-s_2 &< \min\{(2s_3-s_4),(-s_3+2s_4)\}
\end{aligned}
$
&
$\displaystyle
\left(
\frac{1}{3}(2-s_1+s_2-s_3-s_4),
\frac{1}{3}(1+s_1-s_2+s_3+s_4),
\frac{1}{3}(1+s_1-s_2+s_3+s_4)
\right)
$
\\

\hline

6
&
$\displaystyle
\begin{aligned}
s_1+(s_2-s_3+s_4) &\le \frac{1}{2},\\
1-2s_1-s_3 &< \min\{(-s_2+2s_4),(2s_2-s_4)\}
\end{aligned}
$
&
$\displaystyle
\left(
\frac{1}{3}(1+s_1+s_2-s_3+s_4),
\frac{1}{3}(2-s_1-s_2+s_3-s_4),
\frac{1}{3}(1+s_1+s_2-s_3+s_4)
\right)
$
\\

\hline

7
&
$\displaystyle
\begin{aligned}
s_1+(s_2+s_3-s_4) &\le \frac{1}{2},\\
1-2s_1-s_4 &< \min\{(-s_2+2s_3),(2s_2-s_3)\}
\end{aligned}
$
&
$\displaystyle
\left(
\frac{1}{3}(1+s_1+s_2+s_3-s_4),
\frac{1}{3}(1+s_1+s_2+s_3-s_4),
\frac{1}{3}(2-s_1-s_2-s_3+s_4)
\right)
$
\\

\hline

8
&
$\displaystyle
\frac{1}{2}-s_1 < \min\{(-s_2+s_3+s_4), (s_2-s_3+s_4), (s_2+s_3-s_4)\}
$
&
$(1/2,1/2,1/2)$
\\

\hline
\end{tabular}
\caption{Conditions for $s_1,s_2,s_3,s_4$ and corresponding solutions $(s_5,s_6,s_7)$ to the optimization problem in Eq.~(\ref{problemTpyramid}). Case~1 corresponds to the corner solution discussed in the main text. Case~2--4 correspond to edge solutions, Case~5--7 to face solutions, and Case~8 to an interior solution of the pyramid. The analytical derivations for Case~2--8 are presented in Appendix~\ref{Appe:othercases}.}
\label{Table:solutions}
\end{table}

Now we consider the case in which the worst-case point lies at the corner \(\bm{S}\), defined by the intersection of the three faces. Geometrically, this corresponds to the situation where the point \(\bm{P}\) is closer to \(\bm{S}\) than to any other candidate solution on the edges or faces of the pyramid $T_{\rm pyramid}$. 
The corner $\bm{S}$ is the closest point in $T_{\rm pyramid}$ to $\bm{P}$ if and only if the displacement vector $\bm{SP}$ belongs to the normal cone of $T_{\rm pyramid}$ at $\bm{S}$. Since this normal cone is generated by the outward normals of the three faces, the condition is
\begin{equation}
\bm{SP} \in \mathrm{cone}(\bm{e}^1,\bm{e}^2,\bm{e}^3),
\end{equation}
namely,
\begin{equation}
\exists \nu_i \ge 0,\ i=1,2,3,\ \text{s.t.}\
\bm{SP}
=
\nu_1 \bm{e}^1
+
\nu_2 \bm{e}^2
+
\nu_3 \bm{e}^3.
\end{equation}
The analytical solution of the worst point $\bm{P}'$ is simply given by $\bm{S}$:
\begin{equation}
    \label{solutioncorner}
    \bm{P'}= \bm{S}=
     \begin{pmatrix}
        s_1+s_2\\
        s_1+s_3\\
        s_1+s_4\\
    \end{pmatrix},
\end{equation}
if and only if $v_1,v_2,v_3\geq0$ is satisfied, 
or in terms of $s_i$,
\begin{align}
    1-2s_1-s_3-s_4&\geq0,\label{cornercond1}\\
    1-2s_1-s_2-s_4&\geq0,\label{cornercond2}\\
    1-2s_1-s_2-s_3&\geq0,\label{cornercond3}
\end{align}
namely,
\be
\label{cornercondsum}
\max\{(s_2+s_3),(s_3+s_4),(s_2+s_4)\} \le 1-2s_1. 
\ee

As shown in Appendix~\ref{satisfyingotherconstraintscorner}, the solution in Eq.~(\ref{solutioncorner}) satisfies the constraint in Eqs.~(\ref{zero1}), (\ref{zero3}), (\ref{zero5}), (\ref{zero7}) and (\ref{zero8}). Therefore, this is also a solution for the original optimization problem in Eq.~(\ref{maxs1problem}) if Eq.~(\ref{cornercondsum}) is satisfied.

\subsection{Numerical analysis}
\label{Sec:singleBstepNumerical}
Here, we briefly present a representative result of our numerical analysis to motivate the discussion in the following section. A more detailed analysis, including a comparison with previous work, is provided in Sec.~\ref{Sec:Numerical}. The secure key rate $R$ in Eq.~(\ref{keyrateB}) is obtained by substituting the solutions $(s_5,s_6,s_7)$ listed in Table~{\ref{Table:solutions}} into the expression for $s_1^{(1)}(s_5,s_6,s_7)$ in Eq.~(\ref{def:s1s2s3forsingleB}). Therefore, $R$ is determined by $s_1$, $s_2$, $s_3$ and $s_4$. 
Setting $f=1$, we evaluate $R$ over a large number of $(s_1,s_2,s_3,s_4)$ configurations. Using a grid search, we generate 450,704 parameter sets that satisfy the physical constraints in Eqs.~(\ref{bound1_s1235})--(\ref{bound4_s1235}). Among these, 141,074 correspond to Case 1 (corner) in Table~\ref{Table:solutions}, of which 37,122 yield a positive key rate $(R>0)$. In contrast, among the remaining 309,630 parameter sets corresponding to cases other than Case 1, none yield a positive key rate. These results indicate that, in practice, only Case 1 is relevant at least  when one B-step is performed in the protocol. Motivated by this observation, we restrict our analysis to Case 1 in the following section when considering multiple B-steps.

\section{Multiple B-steps}
\label{Sec:multiB}

In this section, we derive an analytical solution to the optimization problem in Eq.~(\ref{keyrate}) when an arbitrary number of B-steps are applied in the protocol.
First, for $i=1,\dots,7$, we define
\begin{equation}
s_i^{(0)}(s_5,s_6,s_7)=s_i.
\label{def:si0}
\end{equation}
Then, for $n\geq1$, we define the error rates after applying the B-step $n$ times as
\begin{equation}
s_i^{(n)}(s_5,s_6,s_7)
=
B_i(\bm{s}^{(n-1)}(s_5,s_6,s_7)),
\label{def:sin}
\end{equation}
where
\begin{equation}
\bm{s}^{(n-1)}(s_5,s_6,s_7)
=
(
s_1^{(n-1)}(s_5,s_6,s_7),
\dots,
s_7^{(n-1)}(s_5,s_6,s_7)
).
\end{equation}
For $i=2,3,4$, the quantities $s_i^{(n)}$ are independent of
$(s_5,s_6,s_7)$, and thus we simply denote them by
$s_i^{(n)}$.
From Eq.~(\ref{B1supperbound}), if $s_1<1/2$, then
\begin{equation}
B_1(\bm{s})<\frac{1}{2}.
\end{equation}
By recursively applying this property, we obtain
\begin{equation}
\label{s1nrange}
s_1^{(n)}(s_5,s_6,s_7)
<
\frac{1}{2}
\end{equation}
for all $n\geq0$.
Since the binary entropy function $h(x)$ is monotonically increasing for
$0\leq x<1/2$, maximizing
$h(s_1^{(n)}(s_5,s_6,s_7))$
is equivalent to maximizing
$s_1^{(n)}(s_5,s_6,s_7)$ itself.
Accordingly, for $n\geq0$, we define
\begin{equation}
\label{def:s1nmax}
s_{1,\mathrm{max}}^{(n)}
\coloneqq
\max_{(s_5,s_6,s_7) \in T_{\rm phys}}
s_1^{(n)}(s_5,s_6,s_7).
\end{equation}
The secure key rate in Eq.~(\ref{keyrate}) is therefore written as
\begin{equation}
\label{SKRformultiinitial}
R
=
1
-
f\cdot
\max\{
h(\tilde{s}_2),
h(\tilde{s}_3)
\}
-
h(s_{1,\mathrm{max}}^{(n)}).
\end{equation}

From the formulations of $\{B_i(\bm{x})\}$ in Eq.~(\ref{sB}), the objective function $s_1^{(n)}(s_5,s_6,s_7)$ becomes increasingly complicated as the number of B-step iterations increases. However, we show that the optimization problem after multiple B-step iterations can be reduced to a recursive sequence of single-step optimizations, thereby avoiding the need to directly optimize this increasingly complicated function. In particular, when ($s_1,s_2,s_3,s_4$) for $n=0$ belong to Case 1 (corner) defined in Sec.~\ref{Sec:singleB}, 
namely,
\be
{\rm max} \{(s_2 +s_3), (s_3+s_4), (s_2+s_4)\} < 1- 2 s_1,
\ee
the optimal solution for any number of B-steps can be shown to remain at
\be
(s_5,s_6,s_7) = (s_1+s_2, s_1+s_3, s_1+s_4),
\ee
which is identical to the optimal solution for a single B-step.
This result is established later in Theorem~\ref{theo:main}. 
To prove the theorem, we first derive three auxiliary lemmas. 
Lemma~\ref{lemma:bound} provides an upper bound on $s_{1, \rm max}^{(n+1)}$ in terms of $s_{1, \rm max}^{(n)}$, which enables the recursive analysis of the worst-case phase error. 
Lemma~\ref{lemma:case} shows that updated parameters are in Case 1 if they are in Case 1 for $n=0$. 
Lemma~\ref{lemma:optimality} shows that the optimal corner solution remains optimal after the parameter update.

First, the following lemma establishes a recursive upper bound relating the worst-case phase error after $n$ and $n+1$ B-steps.
\begin{lemma}
\label{lemma:bound}
Suppose that $s_{1,\mathrm{max}}^{(n)}$, $s_2^{(n)}$, $s_3^{(n)}$, and $s_4^{(n)}$ satisfy
\begin{equation}
\max \{s_2^{(n)}+s_3^{(n)},\, s_3^{(n)}+s_4^{(n)},\, s_2^{(n)}+s_4^{(n)}\}
< 1-2s_{1,\mathrm{max}}^{(n)}.
\label{conditioncornernstep}
\end{equation}
Then,
\begin{equation}
\begin{split}
s_{1,\mathrm{max}}^{(n+1)}
=
\max_{(s_5,s_6,s_7) \in T_{\rm phys}}
s_1^{(n+1)}(s_5,s_6,s_7)
\leq
B_1(&s_{1,\mathrm{max}}^{(n)},
s_2^{(n)},
s_3^{(n)},
s_4^{(n)}, \\
&
s_{1,\mathrm{max}}^{(n)}+s_2^{(n)},
s_{1,\mathrm{max}}^{(n)}+s_3^{(n)},
s_{1,\mathrm{max}}^{(n)}+s_4^{(n)}).
\end{split}
\label{eq:bsbound}
\end{equation}
The equality holds if there exists $(s_5,s_6,s_7)$ satisfying
\begin{align}
s_1^{(n)}(s_5,s_6,s_7)
&=
s_{1,\mathrm{max}}^{(n)}, \\
s_5^{(n)}(s_5,s_6,s_7)
&=
s_{1,\mathrm{max}}^{(n)}+s_2^{(n)}, \\
s_6^{(n)}(s_5,s_6,s_7)
&=
s_{1,\mathrm{max}}^{(n)}+s_3^{(n)}, \\
s_7^{(n)}(s_5,s_6,s_7)
&=
s_{1,\mathrm{max}}^{(n)}+s_4^{(n)}.
\end{align}
\end{lemma}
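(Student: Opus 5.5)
The plan is to view one more B-step as the single-step optimization of Sec.~\ref{Sec:singleB}, applied to the state after $n$ B-steps. By Eq.~(\ref{def:sin}) we have $s_1^{(n+1)}(s_5,s_6,s_7)=B_1(\bm{s}^{(n)}(s_5,s_6,s_7))$. Write $\bm{s}^{(n)}=(x_1,s_2^{(n)},s_3^{(n)},s_4^{(n)},x_5,x_6,x_7)$ with $x_i=s_i^{(n)}(s_5,s_6,s_7)$. Then $x_1\le s_{1,\max}^{(n)}$ by definition, and $x_1<1/2$ by Eq.~(\ref{s1nrange}).

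The key observation is that $\bm{s}^{(n)}$ comes from a GHZ-diagonal state: the B-step maps normalized nonnegative $\{p_{p,i_1,i_2}\}$ to normalized nonnegative ones via Eq.~(\ref{ppijtransform}). Hence $(x_5,x_6,x_7)$ lies in the physical region $T_{\rm phys}^{(n)}$ defined by Eqs.~(\ref{bound1})--(\ref{bound4}), with $(s_1,\dots,s_4)$ replaced by $(x_1,s_2^{(n)},s_3^{(n)},s_4^{(n)})$. In particular, $(x_5,x_6,x_7)$ lies in the corresponding pyramid $T^{(n)}_{\rm pyramid}$, whose corner is $\bm{S}(x_1)=(x_1+s_2^{(n)},x_1+s_3^{(n)},x_1+s_4^{(n)})$.

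The first step is to bound the inner maximization for fixed $x_1$. Since $x_1\le s^{(n)}_{1,\max}$, hypothesis (\ref{conditioncornernstep}) gives $\max\{s_2^{(n)}+s_3^{(n)},\dots\}<1-2x_1$, so the Case~1 condition (\ref{cornercondsum}) holds for $(x_1,s_2^{(n)},s_3^{(n)},s_4^{(n)})$. Sec.~\ref{Sec:singleB} then applies. $B_1$ depends on $(x_5,x_6,x_7)$ only through $-\sum_{i=5}^7(x_i-1/2)^2$, its denominator $p_{\rm pass}$ is independent of these variables, and the point of $T^{(n)}_{\rm pyramid}$ closest to $\bm{P}$ is the corner. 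Therefore
\[
B_1(\bm{s}^{(n)})\le g(x_1)\coloneqq B_1\bigl(x_1,s_2^{(n)},s_3^{(n)},s_4^{(n)},x_1+s_2^{(n)},x_1+s_3^{(n)},x_1+s_4^{(n)}\bigr).
\]

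The second step, which I expect to be the main obstacle, is to show that $g$ is nondecreasing on $[0,s^{(n)}_{1,\max}]$, so that $g(x_1)\le g(s^{(n)}_{1,\max})$. I would compute it explicitly. Since $p_{\rm pass}$ involves only $s_2,s_3,s_4$, the numerator of $g$ is a quadratic in $x_1$. Write $a=s_2^{(n)}$, $b=s_3^{(n)}$, $c=s_4^{(n)}$. The linear part is $-x_1^2+x_1+2x_1$, and the terms $-(x_1+a)^2-(x_1+b)^2-(x_1+c)^2$ contribute $-3x_1^2-2x_1(a+b+c)$. The derivative is therefore
\[
g'(x_1)\propto 3-8x_1-2(a+b+c).
\]
This is nonnegative provided $8x_1+2(a+b+c)\le3$. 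Summing the three Case~1 inequalities gives $2(a+b+c)\le 3-6x_1$, which falls short by $2x_1$, so I would need to recheck the algebra or use the inequalities more carefully. For example, combining $a+b+c\le \tfrac32(1-2x_1)$ with $x_1<1/2$ gives positivity of the numerator at the relevant range.

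If the monotonicity fails in a corner of parameter space, the fallback is different. One can note that $B_1$ is increasing in its first argument when the other arguments are fixed, since $\partial_{x_1}B_1\propto 1-2x_1>0$. One can also observe that the pyramid $T^{(n)}_{\rm pyramid}$ for $x_1$ is contained in the pyramid for $s^{(n)}_{1,\max}$, because its corner $\bm{S}$ moves by $(\delta,\delta,\delta)$ along the direction $\sum\bm{e}_i$ and the faces shift outward. With the Case~1 condition at $s^{(n)}_{1,\max}$, these two facts give the bound directly: first maximize over the enlarged pyramid, which yields its corner, and then raise $x_1$. I would in fact lead with this containment argument, since it avoids the quadratic.

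Taking the maximum over $(s_5,s_6,s_7)\in T_{\rm phys}$ then yields Eq.~(\ref{eq:bsbound}). For the equality claim, if some $(s_5,s_6,s_7)$ achieves $x_1=s^{(n)}_{1,\max}$ and $(x_5,x_6,x_7)=\bm{S}(s^{(n)}_{1,\max})$, then $s_1^{(n+1)}$ at that point equals the right-hand side, so the maximum attains the bound.
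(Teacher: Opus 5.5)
Your containment argument, which you say you would lead with, is exactly the paper's proof. The paper uses $p^{(n)}_{101},p^{(n)}_{110},p^{(n)}_{111}\ge 0$ to confine $(s_5^{(n)},s_6^{(n)},s_7^{(n)})$, enlarges the pyramid by replacing $s_1^{(n)}$ with $s^{(n)}_{1,\max}$, raises the first argument of $B_1$ using $\partial_{x_1}B_1>0$, and then invokes the Case~1 corner solution. Your ordering of the last two steps is equivalent, since $B_1$ is additively separable in $x_1$ and $(x_5,x_6,x_7)$.

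Your first route also works, but only after fixing an algebra slip. The terms linear in $x_1$ total $4x_1$, so $g'(x_1)\propto 4-8x_1-2(a+b+c)=2(1-2x_1-a-b)+2(1-2x_1-c)>0$ under Case~1. Your expression $3-8x_1-2(a+b+c)$ can be negative (e.g.\ $x_1=0.2$, $a=b=c=0.29$), so the positivity you asserted for it does not follow.
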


\begin{proof}
Let $p_{p,i_1,i_2}^{(n)}~(p,i_1,i_2\in\{0,1\})$ denote the probability distribution obtained after applying the B-step transformation Eq.~(\ref{ppijtransform}) $n$ times.
Even after $n$ B-steps, the relations between $\{p_{p,i_1,i_2}^{(n)}\}$ and $\{s_i^{(n)}\}$ given in Eqs.~(\ref{ptos}) and (\ref{ppij}) remain valid.
In particular,
\begin{equation}
\label{pkpk}
\begin{split}
p_{101}^{(n)}
=&
\frac{
s_1^{(n)}
-s_2^{(n)}
+s_3^{(n)}
+s_4^{(n)}
+s_5^{(n)}
-s_6^{(n)}
-s_7^{(n)}
}{4},\\
p_{110}^{(n)}
=&
\frac{
s_1^{(n)}
+s_2^{(n)}
-s_3^{(n)}
+s_4^{(n)}
-s_5^{(n)}
+s_6^{(n)}
-s_7^{(n)}
}{4},\\
p_{111}^{(n)}
=&
\frac{
s_1^{(n)}
+s_2^{(n)}
+s_3^{(n)}
-s_4^{(n)}
-s_5^{(n)}
-s_6^{(n)}
+s_7^{(n)}
}{4},
\end{split}
\end{equation}
where we abbreviated
$s_i^{(n)} \equiv s_i^{(n)}(s_5,s_6,s_7)$ for $i=1,5,6,7$.
From the physical constraints
$p_{101}^{(n)}\geq0$,
$p_{110}^{(n)}\geq0$,
and
$p_{111}^{(n)}\geq0$,
we obtain
\begin{equation}
\begin{split}
-s_5^{(n)}+s_6^{(n)}+s_7^{(n)}
&\leq
s_1^{(n)}-s_2^{(n)}+s_3^{(n)}+s_4^{(n)},\\
+s_5^{(n)}-s_6^{(n)}+s_7^{(n)}
&\leq
s_1^{(n)}+s_2^{(n)}-s_3^{(n)}+s_4^{(n)},\\
+s_5^{(n)}+s_6^{(n)}-s_7^{(n)}
&\leq
s_1^{(n)}+s_2^{(n)}+s_3^{(n)}-s_4^{(n)}.
\end{split}
\end{equation}
Since
$s_1^{(n)}\leq s_{1,\mathrm{max}}^{(n)}$,
it follows that
\begin{equation}
\label{upper}
\begin{split}
-s_5^{(n)}+s_6^{(n)}+s_7^{(n)}
&\leq
s_{1,\mathrm{max}}^{(n)}
-s_2^{(n)}
+s_3^{(n)}
+s_4^{(n)},\\
+s_5^{(n)}-s_6^{(n)}+s_7^{(n)}
&\leq
s_{1,\mathrm{max}}^{(n)}
+s_2^{(n)}
-s_3^{(n)}
+s_4^{(n)},\\
+s_5^{(n)}+s_6^{(n)}-s_7^{(n)}
&\leq
s_{1,\mathrm{max}}^{(n)}
+s_2^{(n)}
+s_3^{(n)}
-s_4^{(n)}.
\end{split}
\end{equation}
Introducing free parameters $(s_5',s_6',s_7')$ independent of $(s_5,s_6,s_7)$, define the region
\begin{equation}
\label{Tpyramid}
\begin{split}
T_{\mathrm{pyramid}}^{(n)}
\coloneqq
\{
(s_5',s_6',s_7')
~|~
&
-s_5'+s_6'+s_7'
\leq
s_{1,\mathrm{max}}^{(n)}
-s_2^{(n)}
+s_3^{(n)}
+s_4^{(n)},\\
&
+s_5'-s_6'+s_7'
\leq
s_{1,\mathrm{max}}^{(n)}
+s_2^{(n)}
-s_3^{(n)}
+s_4^{(n)},\\
&
+s_5'+s_6'-s_7'
\leq
s_{1,\mathrm{max}}^{(n)}
+s_2^{(n)}
+s_3^{(n)}
-s_4^{(n)}
\}.
\end{split}
\end{equation}
Equations~(\ref{upper}) and (\ref{Tpyramid}) imply that the accessible region of
$(s_5^{(n)},s_6^{(n)},s_7^{(n)})$
is contained in
$T_{\mathrm{pyramid}}^{(n)}$.
Therefore,
\begin{equation}
\begin{split}
s_{1,\mathrm{max}}^{(n+1)}
&=
\max_{(s_5,s_6,s_7)\in T_{\rm phys}}
B_1(
s_1^{(n)},
s_2^{(n)},
s_3^{(n)},
s_4^{(n)},
s_5^{(n)},
s_6^{(n)},
s_7^{(n)}
) \\
&\leq
\max_{(s_5',s_6',s_7')\in T_{\mathrm{pyramid}}^{(n)}}
\max_{(s_5,s_6,s_7)\in T_{\rm phys}}
B_1(
s_1^{(n)},
s_2^{(n)},
s_3^{(n)},
s_4^{(n)},
s_5',
s_6',
s_7'
).
\end{split}
\end{equation}
Since
\begin{equation}
\frac{\partial B_1(\bm{x})}{\partial x_1}
=
\frac{1-2x_1}{p_{\mathrm{pass}}},
\end{equation}
$B_1(\bm{x})$ is monotonically increasing in $x_1$ for $x_1<1/2$.
Using Eq.~(\ref{s1nrange}),
$s_1^{(n)}<1/2$ holds, and hence
\begin{equation}
\begin{split}
\max_{(s_5,s_6,s_7)\in T_{\rm phys}}
B_1(
s_1^{(n)},
s_2^{(n)},
s_3^{(n)},
s_4^{(n)},
s_5',
s_6',
s_7'
)
=
B_1(
s_{1,\mathrm{max}}^{(n)},
s_2^{(n)},
s_3^{(n)},
s_4^{(n)},
s_5',
s_6',
s_7'
).
\end{split}
\end{equation}
Thus,
\begin{equation}
s_{1,\mathrm{max}}^{(n+1)}
\leq
\max_{(s_5',s_6',s_7')\in T_{\mathrm{pyramid}}^{(n)}}
B_1(
s_{1,\mathrm{max}}^{(n)},
s_2^{(n)},
s_3^{(n)},
s_4^{(n)},
s_5',
s_6',
s_7'
).
\end{equation}
The right-hand side is exactly the same optimization problem as Eq.~(\ref{problemTpyramid}) for a single B-step.
Condition~(\ref{conditioncornernstep}) corresponds to the corner condition in Eq.~(\ref{cornercondsum}).
Therefore, from Eq.~(\ref{solutioncorner}), the maximum is attained at
\begin{equation}
s_5'
=
s_{1,\mathrm{max}}^{(n)}+s_2^{(n)},
\quad
s_6'
=
s_{1,\mathrm{max}}^{(n)}+s_3^{(n)},
\quad
s_7'
=
s_{1,\mathrm{max}}^{(n)}+s_4^{(n)}.
\end{equation}
This proves Eq.~(\ref{eq:bsbound}).
\end{proof}

Next, the following lemma shows that once the parameters belong to the corner regime, they remain in the same regime under arbitrary iterations of the B-step.

\begin{lemma}
\label{lemma:case}
If
\begin{equation}
\max \{s_2+s_3,\, s_3+s_4,\, s_2+s_4\}
<
1-2s_1,
\end{equation}
then, for any $n\geq0$,
\begin{equation}
\label{eq:theooptimality}
\max \{
s_2^{(n)}+s_3^{(n)},
s_3^{(n)}+s_4^{(n)},
s_2^{(n)}+s_4^{(n)}
\}
<
1-2s_{1,\mathrm{max}}^{(n)}
\end{equation}
holds.
\end{lemma}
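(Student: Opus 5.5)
The plan is to argue by induction on $n$, using Lemma~\ref{lemma:bound} to control $s_{1,\mathrm{max}}^{(n+1)}$. For $n=0$ the claim is exactly the hypothesis, since $s_{1,\mathrm{max}}^{(0)}=s_1$ by Eq.~(\ref{def:si0}). For the inductive step, assume Eq.~(\ref{eq:theooptimality}) holds at step $n$. Then condition~(\ref{conditioncornernstep}) of Lemma~\ref{lemma:bound} is satisfied, so
\begin{equation*}
s_{1,\mathrm{max}}^{(n+1)}\le B_1(a,b,c,d,a+b,a+c,a+d),
\end{equation*}
where I abbreviate $a=s_{1,\mathrm{max}}^{(n)}$, $b=s_2^{(n)}$, $c=s_3^{(n)}$, $d=s_4^{(n)}$. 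It therefore suffices to show
\begin{equation*}
1-2B_1(a,b,c,d,a+b,a+c,a+d) > B_j(\bm{s}^{(n)})+B_k(\bm{s}^{(n)})
\end{equation*}
for each pair $(j,k)\in\{(2,3),(3,4),(2,4)\}$. Note that $s_2^{(n+1)},s_3^{(n+1)},s_4^{(n+1)}$ depend only on $b,c,d$.

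The first computation is to evaluate $B_1$ at the corner point. Substituting $x_5=a+b$, $x_6=a+c$, $x_7=a+d$ into Eq.~(\ref{sB}), the numerator collapses to $2a(2-2a-b-c-d)$, so
\begin{equation*}
B_1=\frac{a(2-2a-b-c-d)}{p_{\rm pass}},\qquad p_{\rm pass}=1+b^2+c^2+d^2-b-c-d.
\end{equation*}
From Eq.~(\ref{sB}) one also has
\begin{equation*}
B_2+B_3=\frac{b^2+c^2+d^2-bd-cd}{p_{\rm pass}},
\end{equation*}
and analogously for the other two pairs. Multiplying through by $p_{\rm pass}>0$ (it is a success probability, a sum of squares that is nonzero here), the pair $(2,3)$ inequality reduces to
\begin{equation*}
1-4a+4a^2-b-c-d+2a(b+c+d)+bd+cd>0.
\end{equation*}
The key step is that this left-hand side factors as
\begin{equation*}
(1-2a-d)(1-2a-b-c),
\end{equation*}
and by symmetry the other pairs give $(1-2a-b)(1-2a-c-d)$ and $(1-2a-c)(1-2a-b-d)$.

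Positivity then follows from the induction hypothesis. It gives $1-2a>b+c,\ c+d,\ b+d$ directly. Since $b,c,d\ge0$ (they are error probabilities), it also gives $1-2a>d$, $1-2a>b$ and $1-2a>c$, for example $d\le c+d<1-2a$. So each factor is strictly positive, and so is each product. Combining with $1-2s_{1,\mathrm{max}}^{(n+1)}\ge 1-2B_1(\cdots)$ closes the induction.

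The main obstacle is only the algebra of the corner evaluation and spotting the factorization; everything else is bookkeeping. One point to check carefully is that Lemma~\ref{lemma:bound} indeed gives only an upper bound on $s_{1,\mathrm{max}}^{(n+1)}$. That is exactly the direction needed here, so the equality clause of Lemma~\ref{lemma:bound} is not required for this lemma.
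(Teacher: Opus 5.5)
Your proposal is correct and follows the paper's proof essentially step for step: induction using Lemma~\ref{lemma:bound}, evaluating $B_1$ at the corner point, and factoring $p_{\rm pass}$ minus the numerator as $(1-2a-b-c)(1-2a-d)$ and its permutations, whose factors are positive by the inductive hypothesis. You also make explicit two points the paper leaves implicit: you use $b,c,d\ge 0$ to get positivity of the single-variable factor, and you require $p_{\rm pass}>0$ where the paper writes only $p_{\rm pass}\ge 0$.
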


\begin{proof}
For $n=0$, Eq.~(\ref{eq:theooptimality}) follows immediately from
$s_i^{(0)}=s_i$.
Assume that, for $n=k$,
\begin{equation}
\label{eq:theooptimalityfork}
\max \{
s_2^{(k)}+s_3^{(k)},
s_3^{(k)}+s_4^{(k)},
s_2^{(k)}+s_4^{(k)}
\}
<
1-2s_{1,\mathrm{max}}^{(k)}
\end{equation}
holds.
For simplicity, define
\begin{equation}
t_1 \coloneqq s_{1,\mathrm{max}}^{(k)},
\quad
t_2 \coloneqq s_2^{(k)},
\quad
t_3 \coloneqq s_3^{(k)},
\quad
t_4 \coloneqq s_4^{(k)}.
\end{equation}
Then Eq.~(\ref{eq:theooptimalityfork}) can be rewritten as
\begin{equation}
\label{eq:theooptimalityforkwithx}
2t_1+t_2+t_3<1,
\quad
2t_1+t_3+t_4<1,
\quad
2t_1+t_2+t_4<1.
\end{equation}
From Lemma~\ref{lemma:bound} and Eq.~(\ref{sB}),
\begin{equation}
\begin{split}
s_{1,\mathrm{max}}^{(k+1)}
&\leq
B_1(
t_1,
t_2,
t_3,
t_4,
t_1+t_2,
t_1+t_3,
t_1+t_4
) \\
&=
\frac{
2t_1\left(
1-t_1-\frac{1}{2}(t_2+t_3+t_4)
\right)
}{
p_{\mathrm{pass}}
},
\end{split}
\end{equation}
where
\begin{equation}
p_{\mathrm{pass}}
=
1-t_2-t_3-t_4+t_2^2+t_3^2+t_4^2.
\end{equation}
On the other hand, from the expressions of
$B_2$, $B_3$, and $B_4$ in Eq.~(\ref{sB}),
\begin{equation}
\begin{split}
s_2^{(k+1)}
&=
\frac{
t_2^2+(t_3-t_4)^2
}{
2p_{\mathrm{pass}}
},\\
s_3^{(k+1)}
&=
\frac{
t_3^2+(t_2-t_4)^2
}{
2p_{\mathrm{pass}}
},\\
s_4^{(k+1)}
&=
\frac{
t_4^2+(t_2-t_3)^2
}{
2p_{\mathrm{pass}}
}.
\end{split}
\end{equation}
Therefore,
\begin{equation}
2s_{1,\mathrm{max}}^{(k+1)}
+
s_2^{(k+1)}
+
s_3^{(k+1)}
\leq
\frac{
N_{23}
}{
p_{\mathrm{pass}}
},
\end{equation}
where
\begin{equation}
N_{23}
=
4t_1
-4t_1^2
-2t_1(t_2+t_3+t_4)
+t_2^2+t_3^2+t_4^2
-t_4(t_2+t_3).
\end{equation}
A direct calculation gives
\begin{equation}
\begin{split}
p_{\mathrm{pass}}-N_{23}
&=
1
-4t_1
-t_2
-t_3
-t_4
+4t_1^2
+2t_1(t_2+t_3+t_4)
+t_4(t_2+t_3) \\
&=
(1-2t_1-t_2-t_3)(1-2t_1-t_4).
\end{split}
\end{equation}
Using Eq.~(\ref{eq:theooptimalityforkwithx}),
both factors on the right-hand side are positive, and hence
$p_{\mathrm{pass}}-N_{23}>0$.
Since $p_{\mathrm{pass}}\geq0$, it follows that
$N_{23}/p_{\mathrm{pass}}<1$.
Therefore,
\begin{equation}
\label{conditionk1for23}
2s_{1,\mathrm{max}}^{(k+1)}
+
s_2^{(k+1)}
+
s_3^{(k+1)}
<
1.
\end{equation}

By exchanging $t_4$ and $t_2$ in the above argument, we similarly obtain
\begin{equation}
\label{conditionk1for34}
2s_{1,\mathrm{max}}^{(k+1)}
+
s_3^{(k+1)}
+
s_4^{(k+1)}
<
1,
\end{equation}
and by exchanging $t_4$ and $t_3$,
\begin{equation}
\label{conditionk1for24}
2s_{1,\mathrm{max}}^{(k+1)}
+
s_2^{(k+1)}
+
s_4^{(k+1)}
<
1.
\end{equation}
Equations~(\ref{conditionk1for23})--(\ref{conditionk1for24}) imply
\begin{equation}
\max \{
s_2^{(k+1)}+s_3^{(k+1)},
s_3^{(k+1)}+s_4^{(k+1)},
s_2^{(k+1)}+s_4^{(k+1)}
\}
<
1-2s_{1,\mathrm{max}}^{(k+1)}.
\end{equation}
Thus, by mathematical induction,
Eq.~(\ref{eq:theooptimality}) holds for all $n\geq0$.
\end{proof}

Next, define
\begin{equation}
\begin{split}
s_{1,c}^{(0)}
&=
s_1,\\
s_{5,c}^{(0)}
&=
s_1+s_2,\\
s_{6,c}^{(0)}
&=
s_1+s_3,\\
s_{7,c}^{(0)}
&=
s_1+s_4,
\end{split}
\label{defsicfor0}
\end{equation}
and for $n\geq0$ and $i=1,5,6,7$,
\begin{equation}
\label{sic}
s_{i,c}^{(n)}
=
s_i^{(n)}(s_1+s_2,s_1+s_3,s_1+s_4).
\end{equation}
Namely, $s_{i,c}^{(n)}$ denotes the error rates after applying the B-step $n$ times with the initial condition $(s_5,s_6,s_7) = (s_1+s_2,s_1+s_3,s_1+s_4)$, which corresponds to the corner solution.
From Eq.~(\ref{def:sin}), for $n\geq1$,
\begin{equation}
\label{ralationsicBstep}
\begin{split}
s_{i,c}^{(n)}
=
B_i(
s_{1,c}^{(n-1)},
s_2^{(n-1)},
s_3^{(n-1)},
s_4^{(n-1)},
s_{5,c}^{(n-1)},
s_{6,c}^{(n-1)},
s_{7,c}^{(n-1)}
).
\end{split}
\end{equation}
We then obtain the following lemma, which shows that the corner structure is preserved under repeated B-steps.

\begin{lemma}
\label{lemma:optimality}
For all $n\geq0$,
\begin{equation}
\label{eq:optimality}
s_{5,c}^{(n)}
=
s_{1,c}^{(n)}+s_2^{(n)},
\quad
s_{6,c}^{(n)}
=
s_{1,c}^{(n)}+s_3^{(n)},
\quad
s_{7,c}^{(n)}
=
s_{1,c}^{(n)}+s_4^{(n)}.
\end{equation}
\end{lemma}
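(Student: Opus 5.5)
The proof is by induction on $n$. The base case is the definition in Eq.~(\ref{defsicfor0}), together with $s_i^{(0)}=s_i$ for $i=2,3,4$. For the inductive step, assume Eq.~(\ref{eq:optimality}) holds at step $n-1$. Write $\bm{x}=(x_1,\dots,x_7)$ with
\begin{equation*}
x_1=s_{1,c}^{(n-1)},\quad x_{j}=s_j^{(n-1)}\ (j=2,3,4),\quad x_{k}=s_{k,c}^{(n-1)}\ (k=5,6,7).
\end{equation*}
The induction hypothesis then reads $x_5=x_1+x_2$, $x_6=x_1+x_3$, $x_7=x_1+x_4$. By Eq.~(\ref{ralationsicBstep}) and the definition of $s_i^{(n)}$ for $i=2,3,4$, it suffices to show the purely algebraic identities
\begin{equation*}
B_5(\bm{x})=B_1(\bm{x})+B_2(\bm{x}),\quad B_6(\bm{x})=B_1(\bm{x})+B_3(\bm{x}),\quad B_7(\bm{x})=B_1(\bm{x})+B_4(\bm{x})
\end{equation*}
on the affine subspace defined by these three relations.

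All $B_i$ in Eq.~(\ref{sB}) share the denominator $2p_{\rm pass}$, which depends only on $x_2,x_3,x_4$. The linear term $L=x_1-x_2-x_3-x_4+x_5+x_6+x_7$ appears in both $B_5$ and $B_1$ and therefore cancels in $B_5-B_1$. Hence I will only compare numerators. For the first identity, the numerator of $2p_{\rm pass}(B_5-B_1-B_2)$ is
\begin{equation*}
-2x_1x_5-2x_6x_7+x_1^2+x_5^2+x_6^2+x_7^2-x_2^2-x_3^2-x_4^2+2x_3x_4 .
\end{equation*}
Grouping terms, this equals
\begin{equation*}
(x_5-x_1)^2-x_2^2+(x_6-x_7)^2-(x_3-x_4)^2 .
\end{equation*}
Under the inductive hypothesis, $x_5-x_1=x_2$ and $x_6-x_7=x_3-x_4$, so the numerator vanishes.

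The identities for $B_6$ and $B_7$ follow by the same grouping. For $B_6$ the numerator becomes
\begin{equation*}
(x_6-x_1)^2-x_3^2+(x_5-x_7)^2-(x_2-x_4)^2 .
\end{equation*}
For $B_7$ it becomes
\begin{equation*}
(x_7-x_1)^2-x_4^2+(x_5-x_6)^2-(x_2-x_3)^2 .
\end{equation*}
Both vanish for the same reason, since $x_5-x_7=x_2-x_4$ and $x_5-x_6=x_2-x_3$ under the hypothesis.

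This gives Eq.~(\ref{eq:optimality}) at step $n$ and closes the induction. The only real work is spotting the completed-square grouping. I expect no obstacle beyond careful bookkeeping of signs, and in particular checking that $L$ really cancels. No positivity of $p_{\rm pass}$ is needed beyond its being nonzero, which holds under assumption~(\ref{assumptionlessthanhalf}).
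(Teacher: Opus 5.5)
Your proof is correct. All three numerators are computed correctly, $L$ does cancel, and the inductive hypothesis gives exactly the six differences you need ($x_5-x_1=x_2$, $x_6-x_7=x_3-x_4$, $x_6-x_1=x_3$, $x_5-x_7=x_2-x_4$, $x_7-x_1=x_4$, $x_5-x_6=x_2-x_3$).

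It is, however, a different route from the paper's. The paper never touches the formulas (\ref{sB}). Via Eq.~(\ref{pkpk}), it translates the corner relations into $p_{101}^{(k)}=p_{110}^{(k)}=p_{111}^{(k)}=0$. The transition rule (\ref{ppijtransform}) updates each $p_{1,i_1,i_2}$ with $(i_1,i_2)\neq(0,0)$ to $2p_{0,i_1,i_2}p_{1,i_1,i_2}/p_{\rm pass}$, so these three stay zero, and the paper then translates back. That argument is shorter and explains \emph{why} the corner is invariant: it is a face of the probability simplex preserved by a multiplicative update. The cost is that it relies on the dictionary (\ref{ptos})--(\ref{ppij}) and on (\ref{sB}) being the $s$-coordinate form of (\ref{ppijtransform}).

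Your argument is self-contained given (\ref{sB}) and is a plain polynomial identity on the affine subspace. The cost is that the completed-square grouping looks like a trick. In fact the two proofs are the same identity in different coordinates. Rewriting your first numerator via (\ref{ptos}) gives exactly $-8(p_{010}p_{110}+p_{011}p_{111})$, which is $-4p_{\rm pass}$ times the updated $p_{110}+p_{111}$. The other two numerators become $-8(p_{001}p_{101}+p_{011}p_{111})$ and $-8(p_{001}p_{101}+p_{010}p_{110})$, which explains why your grouping works.

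A minor point: you do not need assumption (\ref{assumptionlessthanhalf}) for $p_{\rm pass}\neq 0$. Indeed $p_{\rm pass}=1-\sum_{j=2}^{4}x_j(1-x_j)\geq 1/4$ for all real $x_j$.
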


\begin{proof}
For $n=0$, Eq.~(\ref{eq:optimality}) follows directly from Eq.~(\ref{defsicfor0}).
Assume that for $n=k$,
\begin{equation}
\label{kassumptioncorner}
s_{5,c}^{(k)}
=
s_{1,c}^{(k)}+s_2^{(k)},
\quad
s_{6,c}^{(k)}
=
s_{1,c}^{(k)}+s_3^{(k)},
\quad
s_{7,c}^{(k)}
=
s_{1,c}^{(k)}+s_4^{(k)}
\end{equation}
holds.
By setting
\begin{equation}
(s_5,s_6,s_7)
=
(s_1+s_2,s_1+s_3,s_1+s_4),
\end{equation}
Eq.~(\ref{sic}) implies
\begin{equation}
s_i^{(k)}(s_5,s_6,s_7)
=
s_{i,c}^{(k)}.
\end{equation}
Substituting this into Eq.~(\ref{pkpk}), we obtain
\begin{equation}
\begin{split}
p_{101}^{(k)}
=&
\frac{
s_{1,c}^{(k)}
-s_2^{(k)}
+s_3^{(k)}
+s_4^{(k)}
+s_{5,c}^{(k)}
-s_{6,c}^{(k)}
-s_{7,c}^{(k)}
}{4},\\
p_{110}^{(k)}
=&
\frac{
s_{1,c}^{(k)}
+s_2^{(k)}
-s_3^{(k)}
+s_4^{(k)}
-s_{5,c}^{(k)}
+s_{6,c}^{(k)}
-s_{7,c}^{(k)}
}{4},\\
p_{111}^{(k)}
=&
\frac{
s_{1,c}^{(k)}
+s_2^{(k)}
+s_3^{(k)}
-s_4^{(k)}
-s_{5,c}^{(k)}
-s_{6,c}^{(k)}
+s_{7,c}^{(k)}
}{4}.
\end{split}
\end{equation}
Using Eq.~(\ref{kassumptioncorner}), we find
\begin{equation}
p_{101}^{(k)}
=
p_{110}^{(k)}
=
p_{111}^{(k)}
=
0.
\end{equation}
From the transition rule Eq.~(\ref{ppijtransform}),
it follows that
\begin{equation}
p_{101}^{(k+1)}
=
p_{110}^{(k+1)}
=
p_{111}^{(k+1)}
=
0.
\end{equation}
Substituting these conditions into Eq.~(\ref{pkpk}) at step $k+1$, we obtain
\begin{equation}
s_{5,c}^{(k+1)}
=
s_{1,c}^{(k+1)}+s_2^{(k+1)},
\quad
s_{6,c}^{(k+1)}
=
s_{1,c}^{(k+1)}+s_3^{(k+1)},
\quad
s_{7,c}^{(k+1)}
=
s_{1,c}^{(k+1)}+s_4^{(k+1)}.
\end{equation}
Thus, by mathematical induction, Eq.~(\ref{eq:optimality}) holds for all $n\geq0$.
\end{proof}

Combining the above lemmas, we obtain the following main theorem.

\begin{thm}
\label{theo:main}
Suppose that $s_1$, $s_2$, $s_3$, and $s_4$ satisfy
\begin{equation}
\label{theorem1eq}
\max \{s_2+s_3,\, s_3+s_4,\, s_2+s_4\}
<
1-2s_1.
\end{equation}
Then, the worst-case phase error rate after applying the B-step $n~(\geq0)$ times, defined in Eq.~(\ref{def:s1nmax}), is given by
\begin{equation}
s_{1,\mathrm{max}}^{(n)}
=
s_{1,c}^{(n)}.
\label{eq:maintheo}
\end{equation}
\end{thm}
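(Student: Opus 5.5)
The proof will be by induction on $n$. We establish the two inequalities $s_{1,\mathrm{max}}^{(n)}\ge s_{1,c}^{(n)}$ and $s_{1,\mathrm{max}}^{(n)}\le s_{1,c}^{(n)}$ separately. For $n=0$ both sides equal $s_1$, since $s_1^{(0)}=s_1$ does not depend on $(s_5,s_6,s_7)$. We only need the corner point $(s_1+s_2,s_1+s_3,s_1+s_4)$ to lie in $T_{\rm phys}$, which is shown in Appendix~\ref{satisfyingotherconstraintscorner} under condition~(\ref{theorem1eq}).

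The lower bound is immediate for every $n$. The corner point $(s_1+s_2,s_1+s_3,s_1+s_4)$ is feasible in $T_{\rm phys}$, so by the definition~(\ref{def:s1nmax}) and by Eq.~(\ref{sic}) we get $s_{1,\mathrm{max}}^{(n)}\ge s_1^{(n)}(s_1+s_2,s_1+s_3,s_1+s_4)=s_{1,c}^{(n)}$.

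For the upper bound, assume $s_{1,\mathrm{max}}^{(k)}=s_{1,c}^{(k)}$. Lemma~\ref{lemma:case} guarantees that the hypothesis~(\ref{conditioncornernstep}) of Lemma~\ref{lemma:bound} holds at step $k$. Lemma~\ref{lemma:bound} then gives
\begin{equation*}
s_{1,\mathrm{max}}^{(k+1)}\le B_1\bigl(s_{1,\mathrm{max}}^{(k)},s_2^{(k)},s_3^{(k)},s_4^{(k)},s_{1,\mathrm{max}}^{(k)}+s_2^{(k)},s_{1,\mathrm{max}}^{(k)}+s_3^{(k)},s_{1,\mathrm{max}}^{(k)}+s_4^{(k)}\bigr).
\end{equation*}
Substituting the induction hypothesis $s_{1,\mathrm{max}}^{(k)}=s_{1,c}^{(k)}$, and using Lemma~\ref{lemma:optimality} to identify $s_{1,c}^{(k)}+s_{j}^{(k)}$ with $s_{j+3,c}^{(k)}$ for $j=2,3,4$, the right-hand side becomes
\begin{equation*}
B_1\bigl(s_{1,c}^{(k)},s_2^{(k)},s_3^{(k)},s_4^{(k)},s_{5,c}^{(k)},s_{6,c}^{(k)},s_{7,c}^{(k)}\bigr),
\end{equation*}
which equals $s_{1,c}^{(k+1)}$ by Eq.~(\ref{ralationsicBstep}). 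Combining this with the lower bound gives $s_{1,\mathrm{max}}^{(k+1)}=s_{1,c}^{(k+1)}$, which closes the induction.

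Equivalently, the equality clause of Lemma~\ref{lemma:bound} applies: the corner initial point realizes $s_1^{(k)}=s_{1,\mathrm{max}}^{(k)}$ together with the required values of $s_5^{(k)},s_6^{(k)},s_7^{(k)}$.

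The main subtlety is the bookkeeping in the inductive step. Lemma~\ref{lemma:bound} must be invoked with the actual maximum $s_{1,\mathrm{max}}^{(k)}$, so the induction hypothesis has to be substituted before Lemma~\ref{lemma:optimality} can match the bound to the corner trajectory. One also has to check that Lemma~\ref{lemma:case} is stated for $s_{1,\mathrm{max}}^{(n)}$ and not for $s_{1,c}^{(n)}$, so that it feeds Lemma~\ref{lemma:bound} directly without circularity. Finally, the strict inequality in~(\ref{theorem1eq}) is used to ensure the corner case of Table~\ref{Table:solutions} applies at each step.
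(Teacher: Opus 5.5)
Your proposal is correct and follows the paper's own induction: Lemma~\ref{lemma:case} supplies the hypothesis of Lemma~\ref{lemma:bound}, and the induction hypothesis together with Lemma~\ref{lemma:optimality} turns the resulting bound into $s_{1,c}^{(k+1)}$ via Eq.~(\ref{ralationsicBstep}). The only difference is cosmetic. You obtain the matching lower bound directly from the feasibility of the corner point $(s_1+s_2,s_1+s_3,s_1+s_4)$ in $T_{\rm phys}$ (Appendix~\ref{satisfyingotherconstraintscorner}), whereas the paper invokes the equality clause of Lemma~\ref{lemma:bound}, which tacitly relies on that same feasibility.
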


\begin{proof}
For $n=0$, Eq.~(\ref{eq:maintheo}) follows immediately from the definitions
Eqs.~(\ref{def:si0}) and (\ref{defsicfor0}).
Assume that
\begin{equation}
s_{1,c}^{(k)}
=
s_{1,\mathrm{max}}^{(k)}
\end{equation}
holds for some $k\geq0$.
From Lemma~\ref{lemma:optimality},
\begin{equation}
\begin{split}
s_{5,c}^{(k)}
&=
s_{1,\mathrm{max}}^{(k)}+s_2^{(k)},\\
s_{6,c}^{(k)}
&=
s_{1,\mathrm{max}}^{(k)}+s_3^{(k)},\\
s_{7,c}^{(k)}
&=
s_{1,\mathrm{max}}^{(k)}+s_4^{(k)}.
\end{split}
\end{equation}
Recalling that
\begin{equation}
s_{i,c}^{(k)}
=
s_i^{(k)}(s_1+s_2,s_1+s_3,s_1+s_4)
\end{equation}
in Eq.~(\ref{sic}), there exists
\begin{equation}
(s_5,s_6,s_7)
=
(s_1+s_2,s_1+s_3,s_1+s_4)
\end{equation}
such that
\begin{equation}
\begin{split}
s_1^{(k)}(s_5,s_6,s_7)
&=
s_{1,\mathrm{max}}^{(k)},\\
s_5^{(k)}(s_5,s_6,s_7)
&=
s_{1,\mathrm{max}}^{(k)}+s_2^{(k)},\\
s_6^{(k)}(s_5,s_6,s_7)
&=
s_{1,\mathrm{max}}^{(k)}+s_3^{(k)},\\
s_7^{(k)}(s_5,s_6,s_7)
&=
s_{1,\mathrm{max}}^{(k)}+s_4^{(k)}.
\end{split}
\end{equation}
Furthermore, by Lemma~\ref{lemma:case},
\begin{equation}
\max \{
s_2^{(k)}+s_3^{(k)},
s_3^{(k)}+s_4^{(k)},
s_2^{(k)}+s_4^{(k)}
\}
<
1-2s_{1,\mathrm{max}}^{(k)}.
\end{equation}
Therefore, the equality condition in Lemma~\ref{lemma:bound} is satisfied, yielding
\begin{equation}
\begin{split}
s_{1,\mathrm{max}}^{(k+1)}
&=
B_1(
s_{1,\mathrm{max}}^{(k)},
s_2^{(k)},
s_3^{(k)},
s_4^{(k)},
s_{1,\mathrm{max}}^{(k)}+s_2^{(k)},
s_{1,\mathrm{max}}^{(k)}+s_3^{(k)},
s_{1,\mathrm{max}}^{(k)}+s_4^{(k)}
)\\
&=
B_1(
s_{1,c}^{(k)},
s_2^{(k)},
s_3^{(k)},
s_4^{(k)},
s_{5,c}^{(k)},
s_{6,c}^{(k)},
s_{7,c}^{(k)}
)\\
&=
s_{1,c}^{(k+1)},
\end{split}
\end{equation}
where the last equality follows from Eq.~(\ref{ralationsicBstep}).
Hence, by mathematical induction,
\begin{equation}
s_{1,\mathrm{max}}^{(n)}
=
s_{1,c}^{(n)}
\end{equation}
holds for all $n\geq0$.
\end{proof}

Finally, we summarize the secure key rate and its recursive evaluation procedure.
From Eq.~(\ref{SKRformultiinitial}) and Theorem~\ref{theo:main}, the secure key rate $R$ is given by
\begin{equation}
R
=
1
-
\max\{h(s_2^{(n)}),\, h(s_3^{(n)})\}
-
h(s_{1,c}^{(n)}).
\label{keyratemultiBstep}
\end{equation}
From Eqs.~(\ref{ralationsicBstep}) and (\ref{eq:optimality}), for $k\geq0$,
\begin{equation}
\begin{split}
s_{1,c}^{(k+1)}
&=
B_1(
s_{1,c}^{(k)},
s_2^{(k)},
s_3^{(k)},
s_4^{(k)},
s_{1,c}^{(k)}+s_2^{(k)},
s_{1,c}^{(k)}+s_3^{(k)},
s_{1,c}^{(k)}+s_4^{(k)}
)\\
&=
\frac{
2s_{1,c}^{(k)}
\left(
1-s_{1,c}^{(k)}
-\frac{1}{2}
(s_2^{(k)}+s_3^{(k)}+s_4^{(k)})
\right)
}{
p_{\mathrm{pass}}^{(k)}
},
\end{split}
\end{equation}
where
\begin{equation}
p_{\mathrm{pass}}^{(k)}
=
1
-
s_2^{(k)}
-
s_3^{(k)}
-
s_4^{(k)}
+
(s_2^{(k)})^2
+
(s_3^{(k)})^2
+
(s_4^{(k)})^2.
\end{equation}
Meanwhile,
\begin{equation}
\begin{split}
s_2^{(k+1)}
&=
\frac{
(s_2^{(k)})^2
+
(s_3^{(k)}-s_4^{(k)})^2
}{
2p_{\mathrm{pass}}^{(k)}
},\\
s_3^{(k+1)}
&=
\frac{
(s_3^{(k)})^2
+
(s_2^{(k)}-s_4^{(k)})^2
}{
2p_{\mathrm{pass}}^{(k)}
},\\
s_4^{(k+1)}
&=
\frac{
(s_4^{(k)})^2
+
(s_2^{(k)}-s_3^{(k)})^2
}{
2p_{\mathrm{pass}}^{(k)}
}.
\end{split}
\end{equation}
By recursively applying these relations for
$k=0,1,\dots,n-1$,
one obtains
$s_{1,c}^{(n)}$,
$s_2^{(n)}$,
and
$s_3^{(n)}$,
which completely determine the secure key rate in Eq.~(\ref{keyratemultiBstep}).

\section{P-step and its combination to B-step}
\label{Sec:multiP}
In this section, we analyze the P-step and its combination with the B-step.
For a single P-step, Eq.~(\ref{sP}) gives
\begin{equation}
P_1(s_1)
=
3s_1^2(1-s_1)+s_1^3
=
3s_1^2-2s_1^3.
\end{equation}
Unlike the B-step case, the phase error after a P-step does not depend on
$s_5$, $s_6$, or $s_7$.
Therefore, no optimization problem needs to be considered.
The same observation applies even when multiple P-steps are performed.

For the combination of B-steps and P-steps, we consider only the case in which P-steps are applied after the B-steps (e.g., BBBPPP).
This is because the results derived in the previous section can be directly applied only to this ordering.
At least for BB84 and BBM92, it has been reported that this ordering yields a higher error threshold than the opposite ordering (e.g., PPPBBB) or alternating applications (e.g., BPBPBP)~\cite{2003Gottesman}.

For $i=1,\dots,7$, define
\begin{equation}
s_i^{(n,0)}(s_5,s_6,s_7)
=
s_i^{(n)}(s_5,s_6,s_7),
\end{equation}
where $s_i^{(n)}(s_5,s_6,s_7)$ denotes the error rate after applying the B-step $n$ times, as defined in Eq.~(\ref{def:sin}). 
Then, for $m\geq1$, we define the error rates after applying the B-step $n$ times followed by the P-step $m$ times as
\begin{equation}
s_i^{(n,m)}(s_5,s_6,s_7)
=
P_i(\bm{s}^{(n,m-1)}(s_5,s_6,s_7)).
\end{equation}
Here,
\begin{equation}
\bm{s}^{(n,m-1)}(s_5,s_6,s_7)
=
(
s_1^{(n,m-1)}(s_5,s_6,s_7),
\dots,
s_7^{(n,m-1)}(s_5,s_6,s_7)
).
\end{equation}
For $i=2,3,4$,
$s_i^{(n,m)}(s_5,s_6,s_7)$
is independent of
$(s_5,s_6,s_7)$,
and we therefore simply denote it by
$s_i^{(n,m)}$.
Since
\begin{equation}
\frac{dP_1(x_1)}{dx_1}
=
6x_1-6x_1^2
=
6x_1(1-x_1)
\geq0
\qquad
(0\leq x_1\leq1),
\label{monoincreP}
\end{equation}
$P_1(x_1)$ is a monotonically increasing function of $x_1$.
Furthermore,
\begin{equation}
P_1\left(\frac{1}{2}\right)
=
\frac{1}{2},
\end{equation}
and therefore
$x_1<1/2$
implies
$P_1(x_1)<1/2$.
From Eq.~(\ref{s1nrange}), the phase error satisfies
$s_1^{(n)}(s_5,s_6,s_7)<1/2$
when only B-steps are applied.
Hence, even after subsequent P-steps,
\begin{equation}
\label{eq:Prange}
s_1^{(n,m)}(s_5,s_6,s_7)
<
\frac{1}{2}
\end{equation}
holds.
Therefore, maximizing
$h(s_1^{(n,m)}(s_5,s_6,s_7))$
is equivalent to maximizing
$s_1^{(n,m)}(s_5,s_6,s_7)$ itself.

We now obtain the following theorem.
Namely, even after applying P-steps following the B-steps, the worst-case solution remains identical to that of the single-step B-step optimization.
\begin{thm}
If
\begin{equation}
\max \{
s_2+s_3,\,
s_3+s_4,\,
s_2+s_4
\}
<
1-2s_1,
\end{equation}
then, for all $n\geq0$ and $m\geq0$,
\begin{equation}
\max_{(s_5,s_6,s_7)\in T_{\rm phys}}
s_1^{(n,m)}(s_5,s_6,s_7)
=
s_1^{(n,m)}(s_1+s_2,s_1+s_3,s_1+s_4)
\label{eq:Psteptheo}
\end{equation}
holds.
\end{thm}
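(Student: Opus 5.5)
The plan is to reduce the statement to Theorem~\ref{theo:main} by using that the P-step update of the phase error involves only the phase error itself. By Eq.~(\ref{sP}), $P_1(\bm{x})=3x_1^2-2x_1^3$ depends only on $x_1$. Hence, for every $(s_5,s_6,s_7)$,
\begin{equation*}
s_1^{(n,m)}(s_5,s_6,s_7)=P_1^{\circ m}\bigl(s_1^{(n)}(s_5,s_6,s_7)\bigr),
\end{equation*}
where $P_1^{\circ m}$ denotes the $m$-fold composition of the scalar map $x\mapsto 3x^2-2x^3$ (the identity for $m=0$). I would verify this first by a short induction on $m$, since $s_1^{(n,m)}=P_1(\bm{s}^{(n,m-1)})$ reads off only the first component of $\bm{s}^{(n,m-1)}$. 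The remaining components $s_5^{(n,m)},s_6^{(n,m)},s_7^{(n,m)}$ therefore never feed back into the phase error, so no new geometric optimization over $T_{\rm phys}$ is needed.

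Next, I would use monotonicity. By Eq.~(\ref{monoincreP}), $P_1$ is nondecreasing on $[0,1]$ and maps $[0,1]$ into $[0,1]$, so $P_1^{\circ m}$ is nondecreasing on $[0,1]$. For each fixed $n$, $s_1^{(n)}(s_5,s_6,s_7)$ lies in $[0,1/2)$ for every $(s_5,s_6,s_7)\in T_{\rm phys}$, by Eq.~(\ref{s1nrange}) together with nonnegativity of the probabilities. It then follows that
\begin{equation*}
\max_{(s_5,s_6,s_7)\in T_{\rm phys}} P_1^{\circ m}\bigl(s_1^{(n)}(s_5,s_6,s_7)\bigr)=P_1^{\circ m}\Bigl(\max_{(s_5,s_6,s_7)\in T_{\rm phys}} s_1^{(n)}(s_5,s_6,s_7)\Bigr)=P_1^{\circ m}\bigl(s_{1,\mathrm{max}}^{(n)}\bigr).
\end{equation*}
The maximum on the right is attained because, by Theorem~\ref{theo:main}, it is attained at the corner point.

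Finally, under the hypothesis of the theorem, Theorem~\ref{theo:main} gives $s_{1,\mathrm{max}}^{(n)}=s_{1,c}^{(n)}=s_1^{(n)}(s_1+s_2,s_1+s_3,s_1+s_4)$. Substituting this and applying the first identity in reverse yields $P_1^{\circ m}(s_{1,c}^{(n)})=s_1^{(n,m)}(s_1+s_2,s_1+s_3,s_1+s_4)$, which is Eq.~(\ref{eq:Psteptheo}). The case $m=0$ is exactly Theorem~\ref{theo:main}.

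The argument is essentially routine. The only point needing care is the interchange of $\max$ with a monotone function. It requires that the argument of $P_1^{\circ m}$ stays in the region where monotonicity holds, and that the maximizer of $s_1^{(n)}$ belongs to $T_{\rm phys}$. The first follows from Eq.~(\ref{s1nrange}) and $P_1([0,1/2))\subseteq[0,1/2)$. For the second, I would note explicitly that the corner point is feasible, using the verification of constraints (\ref{zero1})--(\ref{zero8}) for the corner solution cited in Sec.~\ref{Sec:singleB}.
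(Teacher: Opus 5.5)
Your proposal is correct and follows essentially the same route as the paper. The paper inducts on $m$, with Theorem~\ref{theo:main} as the base case, and at each step uses that $P_1$ depends only on $x_1$ and is monotonically increasing, so that $\max P_1(s_1^{(n,k)}) = P_1(\max s_1^{(n,k)})$; you unroll this induction into the composition $P_1^{\circ m}$ and interchange the maximum once, and you also spell out the range $[0,1/2)$ and the feasibility of the corner point in $T_{\rm phys}$, which the paper leaves implicit.
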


\begin{proof}
For $m=0$, Eq.~(\ref{eq:Psteptheo}) follows directly from Theorem~\ref{theo:main}.
Assume that, for some $k\geq0$,
\begin{equation}
\max_{(s_5,s_6,s_7)\in T_{\rm phys}}
s_1^{(n,k)}(s_5,s_6,s_7)
=
s_1^{(n,k)}(s_1+s_2,s_1+s_3,s_1+s_4)
\end{equation}
holds.
Then,
\begin{equation}
\begin{split}
\max_{(s_5,s_6,s_7)\in T_{\rm phys}}
s_1^{(n,k+1)}(s_5,s_6,s_7)
&=
\max_{s_5,s_6,s_7}
P_1(s_1^{(n,k)}(s_5,s_6,s_7)) \\
&=
P_1\left(
\max_{(s_5,s_6,s_7)\in T_{\rm phys}}
s_1^{(n,k)}(s_5,s_6,s_7)
\right) \\
&=
P_1(
s_1^{(n,k)}(s_1+s_2,s_1+s_3,s_1+s_4)
) \\
&=
s_1^{(n,k+1)}(s_1+s_2,s_1+s_3,s_1+s_4).
\end{split}
\end{equation}
In the second equality, we used the fact that
$P_1(x_1)$
is monotonically increasing in $x_1$, as shown in Eq.~(\ref{monoincreP}).
Therefore, by mathematical induction,
Eq.~(\ref{eq:Psteptheo}) holds for all $m\geq0$.
\end{proof}

Summarizing all the above results, when
\begin{equation}
\max \{
s_2+s_3,\,
s_3+s_4,\,
s_2+s_4
\}
<
1-2s_1,
\end{equation}
the secure key rate after applying the B-step $n~(\geq0)$ times followed by the P-step $m~(\geq0)$ times is given by
\begin{equation}
R
=
1
-
\max\{
h(s_2^{(n,m)}),
h(s_3^{(n,m)})
\}
-
h(s_1^{(n,m)}(s_1+s_2,s_1+s_3,s_1+s_4))
.
\end{equation}
The quantities
$s_1^{(n,m)}(s_1+s_2,s_1+s_3,s_1+s_4)$,
$s_2^{(n,m)}$,
and
$s_3^{(n,m)}$
are obtained by applying
$P_1(x_1)$,
$P_2(x_2)$,
and
$P_3(x_3)$,
respectively, $m$ times to
$s_{1,c}^{(n)}$,
$s_2^{(n)}$,
and
$s_3^{(n)}$
obtained after the $n$ B-steps.

\section{Numerical analysis}
\label{Sec:Numerical}
In this section, we numerically investigate how the error threshold for non-negative secure key rate \(R\) is improved by repeated applications of the B-step and the P-step. 
Setting the error-correction efficiency to \(f=1\), we consider the symmetric parameter setting
\[
s_1=s_X, \qquad s_2=s_3=s_4=s_Z,
\]
and plot the threshold curve \((s_Z,s_X)\) satisfying \(R=0\) in Fig.~\ref{Threshold_fig}. 
The solid curves correspond to the threshold obtained from our secure key rate formula \(R\) when the B-step is applied \(0,1,2,3,4,\) and \(5\) times, respectively, ordered from lower left to upper right. The colors are blue, orange, green, red, purple, and brown, respectively. The region below each curve corresponds to the parameter region where secure key extraction is possible. All of these regions satisfy the condition of Theorem~1, Eq.~(\ref{theorem1eq}), namely,
\[
s_Z+s_X<\frac12,
\]
under the assumption \(s_2=s_3=s_4=s_Z\). The line \(s_Z+s_X=1/2\) is shown as a dotted line in the figure.
The dashed curves represent the thresholds obtained in the previous work analyzing a single B-step~\cite{2026Krawec}. Compared with the case of a single B-step considered in the previous work, the threshold is significantly improved by applying two or more B-steps. In particular, focusing on the symmetric case \(s_Z=s_X\), the threshold is approximately \(11\%\) without B-step and approximately \(15\%\) with one B-step, whereas it exceeds \(20\%\) when five B-steps are applied.

\begin{figure}
 \centering
 \includegraphics[scale=0.35]
    {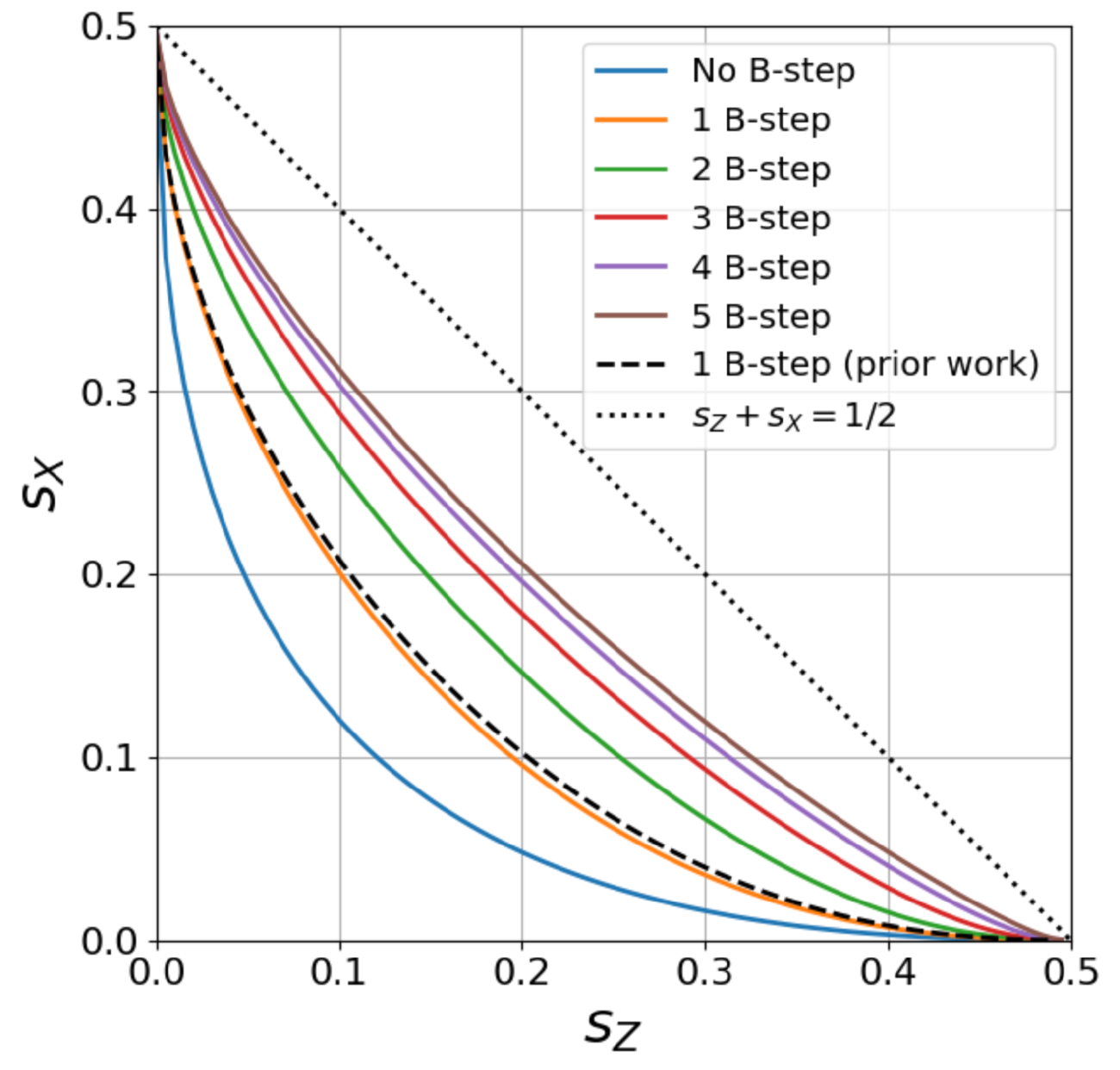}
    \caption{
Threshold curves $(s_Z,s_X)$ for non-negative secure key rate $R$ under the symmetric setting $s_1=s_X$ and $s_2=s_3=s_4=s_Z$, with error-correction efficiency $f=1$. The solid curves correspond to $0,1,2,3,4,$ and $5$ applications of the B-step (from lower left to upper right). The region below each curve yields a non-negative secure key rate. The dashed curve shows the threshold obtained by using the previous analysis of a single B-step~\cite{2026Krawec}, and the dotted line indicates $s_Z+s_X=1/2$, corresponding to the condition in Theorem~1. Repeated applications of the B-step significantly improve the threshold; in particular, along the symmetric line $s_Z=s_X$, the tolerable error rate increases from approximately $11\%$ without a B-step and $15\%$ with a single B-step to over $20\%$ after five B-steps.
}
 \label{Threshold_fig}
\end{figure}

As far as our numerical analysis is concerned, we did not observe any improvement of the threshold by applying a P-step after the B-step, in contrast to the result in \cite{2003Gottesman}. Qualitatively, this is because the increase in \(s_Z\) caused by the P-step dominates the reduction in \(s_X\). More specifically, the behavior can be understood as follows.
As the B-step is repeatedly applied, \(s_X\) approaches \(1/2\), while \(s_Z\) approaches \(0\). We therefore assume
\begin{equation}
s_X=\frac12-\epsilon_X,
\qquad
s_Z=\epsilon_Z,
\qquad
(\epsilon_X,\epsilon_Z\ll1).
\end{equation}
The secure key rate is given by
\begin{equation}
R=1-h(s_X)-h(s_Z).
\label{keyratePstepdisc}
\end{equation}
Suppose that a single P-step transforms the error rates as
\[
s_X\to s'_X,
\qquad
s_Z\to s'_Z.
\]
Then,
\begin{equation}
s'_X=P_1(s_X)
=
\frac12-\frac32\epsilon_X+2\epsilon_X^3
=
\frac12-\frac32\epsilon_X+O(\epsilon_X^3).
\end{equation}
Using the Taylor expansion of the binary entropy around \(1/2\),
\begin{equation}
h\!\left(\frac12-\epsilon\right)
=
1-\frac{2\epsilon^2}{\ln2}
+O(\epsilon^4),
\end{equation}
we obtain
\begin{equation}
1-h(s_X)
=
\frac{2\epsilon_X^2}{\ln2}
+O(\epsilon_X^4),
\qquad
1-h(s'_X)
=
\frac{2}{\ln2}
\left(\frac32\epsilon_X\right)^2
+O(\epsilon_X^4),
\label{epsilonsX}
\end{equation}
and therefore
\begin{equation}
1-h(s'_X)
=
\frac94\bigl(1-h(s_X)\bigr)
+O(\epsilon_X^4).
\end{equation}
On the other hand,
\begin{equation}
s'_Z=P_2(s_Z)
=
3\epsilon_Z(1-\epsilon_Z)^2+\epsilon_Z^3
=
3\epsilon_Z+O(\epsilon_Z^2),
\end{equation}
and
\begin{equation}
h(\epsilon)
=
-\epsilon\log_2\epsilon
-(1-\epsilon)\log_2(1-\epsilon)
=
-\epsilon\log_2\epsilon
+O(\epsilon).
\end{equation}
Hence,
\begin{equation}
\label{hsZ}
h(s_Z)
=
-\epsilon_Z\log_2\epsilon_Z
+O(\epsilon_Z),
\end{equation}
and
\begin{equation}
h(s'_Z)
=
-3\epsilon_Z\log_23\epsilon_Z
+O(\epsilon_Z)
=
-3\epsilon_Z\log_2\epsilon_Z
+O(\epsilon_Z).
\label{hsdashZ}
\end{equation}
Therefore,
\begin{equation}
h(s'_Z)
=
3h(s_Z)+O(\epsilon_Z).
\label{epsilonsZ}
\end{equation}
From Eqs.~(\ref{epsilonsX}) and (\ref{epsilonsZ}), the secure key rate $R'$ after the P-step is given by 
\begin{equation}
\begin{split}
R'
=
1-h(s'_X)-h(s'_Z)
&=
\frac94(1-h(s_X))
-3h(s_Z)
+O(\epsilon_X^4)
+O(\epsilon_Z)
\\
&=
\frac94(1-h(s_X)-h(s_Z))
-\frac34h(s_Z)
+O(\epsilon_X^4)
+O(\epsilon_Z)
\\
&=
\frac94R
-\frac34h(s_Z)
+O(\epsilon_X^4)
+O(\epsilon_Z).
\label{keyratePstepdisc2}
\end{split}
\end{equation}
From Eq.~(\ref{hsZ}), the negative contribution
\be
-\frac34 h(s_Z)
\ee
dominates the higher-order correction terms in Eq.~(\ref{keyratePstepdisc2}) under the assumptions
\be
\label{tinyassumption}
\epsilon_X,\epsilon_Z\ll1,
\qquad
-\epsilon_Z\log_2\epsilon_Z \gg \epsilon_X^4.
\ee
Therefore, asymptotically, \(R=0\) implies \(R'<0\), while \(R'=0\) implies \(R>0\). In other words, at least under these assumptions, one should not expect the P-step to improve the threshold.
Indeed, for example, when the initial \(Z\)- and \(X\)-error rates are both \(15\%\), after five B-steps, the value of $\epsilon_X$ is approximately \(4\times10^{-7}\), while $\epsilon_Z$ is approximately \(8\times10^{-25}\), which well satisfies the assumptions in Eq.~(\ref{tinyassumption}).

\section{Conclusion}
\label{Sec:Conclusion}
In this paper, we analyzed the asymptotic security of QCKA with tripartite GHZ states and two measurement bases by considering not only a single B-step but also multiple B-steps and their combination with P-steps.
For a single B-step, we first formulated the corresponding optimization problem and derived eight analytical solutions by exploiting the geometric structure of the objective function and the constraints.
We numerically confirmed that only one of these eight cases yields a positive key rate for a single B-step.
For this particular case, we extended the analysis to multiple B-steps by recursively applying the single-step analysis, and showed that the optimal solution of the optimization problem coincides with that of the single B-step case.
We further proved that this property remains valid even when multiple P-steps are applied after the B-steps.
Using these results, we numerically demonstrated that iterative application of the B-step enables the tolerable error threshold to exceed 20\%, representing a substantial improvement over the approximately 11\% threshold achievable without B-steps and the approximately 15\% threshold obtained with only a single B-step.

An important direction for future work is extending the present theory to more general settings.
First, in this work, we assumed that the state distributed to Alice, Bob, and Charlie is a qubit state (e.g., a single-photon state in optical implementations).
The present analysis is expected to remain valid even when multi-photon components are present, by employing the squashing method for two-basis QKD protocols~\cite{2008Beaudry, 2008Tsurumaru}.
Second, although our analysis considered asymptotic security under the IID assumption (collective attacks), extending the proof to finite-key security against general attacks remains an open problem.  Techniques such as the quantum de Finetti theorem~\cite{2007Renner} or the post-selection technique~\cite{2009Matthias} may be useful for this purpose, although careful treatment is required.
Finally, the most interesting open problem is extending the present analysis to QCKA with a larger number of parties.
As the number of parties increases, the number of optimization variables also grows, making it significantly more difficult to derive comprehensive analytical solutions, as already observed in Sec.~\ref{Sec:singleB}.
Nevertheless, our results imply that once a suitable solution for the single B-step case is identified, the analysis of multiple B-steps can become drastically simpler.
This suggests that a similar approach may also be applicable to multipartite QCKA with more parties.
Such an extension would be important for realizing high-error-threshold QCKA in general multipartite settings, and remains an important subject for future work.

\section*{Acknowledgement}
We thank Yoshihide Tonomura, Koji Azuma, Akihiro Mizutani and Devashish Tupkary for helpful discussions. 
We are grateful for support from the MIC R\&D of ICT Priority Technology (Grant No. JPMI00316) of the Ministry of Internal Affairs and Communications of Japan.

\appendix

\section{Analytical solutions for the optimization problem Eq.~(\ref{problemTpyramid})}
\label{Appe:othercases}

In this appendix, we derive analytical solutions to the optimization problem in Eq.~(\ref{problemTpyramid}) for the seven non-corner cases, complementing the corner case discussed in Sec.~\ref{subsection:analyticalsolutions}. As discussed in the main text, these solutions are obtained under the three constraints Eqs.~(\ref{zero2}), (\ref{zero4}), and (\ref{zero6}). The remaining constraints, Eqs.~(\ref{zero1}), (\ref{zero3}), (\ref{zero5}), (\ref{zero7}), and (\ref{zero8}), are verified in Appendix~\ref{Appe:constraints}, thereby establishing that the solutions presented here are also solutions to the original optimization problem in Eq.~(\ref{maxs1problem}). Among the eight cases summarized in Table~\ref{Table:solutions}, Case 1 corresponds to the corner solution discussed in the main text. Cases 2--4 correspond to edge solutions, Cases 5--7 to face solutions, and Case 8 to an interior solution of the pyramid.

\subsection{ Case 2: Edge $e_1$}
\label{Sec:e_1}

We consider the case where the worst-case point lies on the edge of $T_{\rm pyramid}$,
defined by the unit vector $\bm{e}_1$.
First, we define the conditions in which this is the worst-case point. This is equivalent to
the condition when the point $\bm{P}$ is closest to the considered edge out of all
other cases. 
In other words, the projection of $\bm P$ onto the line spanned by
$\bm e_1$ lies on the corresponding edge of $T_{\rm pyramid}$,
while the displacement from this projection to $\bm P$
belongs to ${\rm cone}(\bm e^2,\bm e^3)$.
In the coordinate space where $\bm{S}$ is the origin,
this condition can be defined in terms of the aforementioned vectors as
\begin{equation}
    \exists \alpha<0, \beta\geq0, \gamma\geq0~~{\rm s.t.}~~ \bm{SP} = \alpha\bm{e}_1 +\beta\bm{e}^2 + \gamma\bm{e}^3,
\end{equation}
in which the analytical solution of the worst point is simply the
vector along $\bm{e}_1$,
\begin{equation}
    \bm{SP'} = \alpha\bm{e}_1.
\end{equation}
$\alpha$ can be obtained by utilizing Eq.~(\ref{e_1}),
\begin{equation}
\begin{split}
    \bm{e}_1 \cdot \bm{SP} &= \alpha\bm{e}_1\cdot\bm{e}_1 + \beta\bm{e}_1\cdot\bm{e}^2 + \gamma\bm{e}_1\cdot\bm{e}^3\\
    &= \alpha |\bm{e}_1|^2 = \alpha\\
    \therefore \alpha &= v_1 = \frac{1}{\sqrt{2}} (1 -2s_1 -s_3 -s_4).\\
\end{split}
\end{equation}
$\beta$, $\gamma$ can be obtained in a similar manner,
\begin{equation}
\begin{split}
    \bm{e}_2 \cdot \bm{SP} &= \alpha\bm{e}_2\cdot\bm{e}_1 + \beta\bm{e}_2\cdot\bm{e}^2 + \gamma\bm{e}_2\cdot\bm{e}^3\\
    &= \frac{1}{2}\alpha+\beta\\
    \therefore \beta &=  v_2 -\frac{1}{2}v_1 = \frac{1}{2\sqrt{2}}(1-2s_1-2s_2+s_3-s_4),\\
\end{split}
\end{equation}
and
\begin{equation}
\begin{split}
    \bm{e}_3 \cdot \bm{SP} &= \alpha\bm{e}_3\cdot\bm{e}_1 + \beta\bm{e}_3\cdot\bm{e}^2 + \gamma\bm{e}_3\cdot\bm{e}^3\\
    &= \frac{1}{2}\alpha+\gamma\\
    \therefore \gamma &=  v_3 -\frac{1}{2}v_1 = \frac{1}{2\sqrt{2}}(1-2s_1-2s_2-s_3+s_4).\\
\end{split}
\end{equation}
In this case, the coordinate of the worst case point $\bm{SP'}$ is
\begin{equation}
\begin{split}
    \bm{SP'} &= \alpha\bm{e}_1 = v_1\cdot\bm{e}_1\\
    &= \frac{1}{\sqrt{2}}(1-2s_1-s_3-s_4) \cdot \frac{1}{\sqrt{2}}
    \begin{pmatrix}
        0\\
        1\\
        1\\
    \end{pmatrix}\\
    &= \frac{1}{2}
    \begin{pmatrix}
        0\\
        1 -2s_1 -s_3 -s_4\\
        1 -2s_1 -s_3 -s_4\\
    \end{pmatrix}.
\end{split}
\end{equation}
Finally, returning to the original coordinate system where $(0,0,0)^\mathrm{T}$ is the origin, the
expression of point $\bm{P'}$ is obtained as
\begin{equation}
\begin{split}
    \bm{P'} &= \bm{SP'} + \bm{S}
    =\frac{1}{2}
    \begin{pmatrix}
        0\\
        1 -2s_1 -s_3 -s_4\\
        1 -2s_1 -s_3 -s_4\\
    \end{pmatrix}
    +
    \begin{pmatrix}
        s_1+s_2\\
        s_1+s_3\\
        s_1+s_4\\
    \end{pmatrix}\\
    &=
     \begin{pmatrix}
        s_1+s_2\\
        \frac{1}{2}(1+s_3-s_4)\\
        \frac{1}{2}(1-s_3+s_4)\\
    \end{pmatrix}.
\end{split}
\end{equation}

To summarize, the worst case $s_5, s_6, s_7$ defined by the point on the edge $\bm e_1$ of $T_{\rm pyramid}$ 
is analytically obtained as
\begin{equation}
    \label{solutione_1}
    \bm{P'}=
     \begin{pmatrix}
        s_1+s_2\\
        \frac{1}{2}(1+s_3-s_4)\\
        \frac{1}{2}(1-s_3+s_4)\\
    \end{pmatrix},
\end{equation}
given that $\alpha<0, \beta\geq0, \gamma\geq0$ is satisfied, or in terms of $s_i$,
\begin{align}
    1-2s_1-s_3-s_4&<0,\label{edgecond1}\\
    1-2s_1-2s_2+s_3-s_4&\geq 0,\label{edgecond2}\\
    1-2s_1-2s_2-s_3+s_4&\geq 0,\label{edgecond3}
\end{align}
namely, 
\begin{align}
1 &< 2s_1+s_3+s_4,\\
|s_3-s_4| &\le 1-2s_1-2s_2.
\end{align}

\subsection{ Case 3: Edge $e_2$}
\label{Sec:e_2}
The analysis for Edge $\bm e_2$ is completely analogous to that for Edge $\bm e_1$.
By exchanging
\[
(\bm e_1, \bm e^2)\leftrightarrow (\bm e_2, \bm e^1),
\]
the worst-case point is obtained as
\begin{equation}
\label{solutione_2}
\bm{P'}=
\begin{pmatrix}
\frac{1}{2}(1+s_2-s_4)\\
s_1+s_3\\
\frac{1}{2}(1-s_2+s_4)
\end{pmatrix},
\end{equation}
provided that
\begin{align} 
1 -2s_1 -s_2 -s_4&<0,\label{e2edgecond1}\\ 
1-2s_1+s_2-2s_3-s_4&\geq 0,\label{e2edgecond2}\\ 
1-2s_1-s_2-2s_3+s_4&\geq 0,\label{e2edgecond3} 
\end{align}
namely,
\begin{align}
1 &< 2s_1+s_2+s_4,\\
|s_2-s_4| &\le 1-2s_1-2s_3.
\end{align}

\subsection{ Case 4: Edge $e_3$}
\label{Sec:e_3}

The analysis for Edge $\bm e_3$ is also completely analogous to that for Edge $\bm e_1$.
By exchanging
\[
(\bm e_1, \bm e^3)\leftrightarrow (\bm e_3, \bm e^1),
\]
the worst-case point is obtained as
\begin{equation}
    \label{solutione_3}
    \bm{P'}=
     \begin{pmatrix}
        \frac{1}{2}(1+s_2-s_3)\\
        \frac{1}{2}(1-s_2+s_3)\\
        s_1+s_4\\
    \end{pmatrix},
\end{equation}
provided that 
\begin{align}
    1 -2s_1 -s_2 -s_3&<0,\label{e3edgecond1}\\
    1-2s_1+s_2-s_3-2s_4&\geq 0,\label{e3edgecond2}\\
    1-2s_1-s_2+s_3-2s_4&\geq 0,\label{e3edgecond3}
\end{align}
namely,
\begin{align}
1 &< 2s_1+s_2+s_3,\\
|s_2-s_3| &\le 1-2s_1-2s_4.
\end{align}

\subsection{ Case 5: Face $e^1$}
\label{Sec:e^1}

We consider the case where the worst-case point lies on the face defined by the
normal $\bm{e}^1$.
First, we define the conditions in which this is the worst case point. This is equivalent to
the condition when the point $\bm{P}$ is closest to the considered face out of all
other cases. In other words, $\bm P$ lies above the considered face, and its orthogonal projection onto that face lies in the interior of the face.
This can be defined in terms of the aforementioned vectors as
\begin{equation}
    \exists \alpha\geq0, \beta<0, \gamma<0~~{\rm s.t.}~~ \bm{SP} = \alpha\bm{e}^1 +\beta\bm{e}_2 + \gamma\bm{e}_3.
\end{equation}
The analytical solution of the worst case point $\bm P'$ is
given by its orthogonal projection onto that face. In the coordinate space where $\bm{S}$ is the origin, this point is given as
$\alpha=0$, or
\begin{equation}
    \bm{SP'} = \beta\bm{e}_2 + \gamma\bm{e}_3.
\end{equation}
The coefficients $\alpha, \beta, \gamma$ can be expressed in terms of $s_i$ as follows.
First, $\alpha$ can be obtained by utilizing Eq.~(\ref{e^1}),
\begin{equation}
\begin{split}
    \bm{e}^1 \cdot \bm{SP} &= \alpha\bm{e}^1\cdot\bm{e}^1 + \beta\bm{e}^1\cdot\bm{e}_2 + \gamma\bm{e}^1\cdot\bm{e}_3\\
    &= \alpha |\bm{e}^1|^2 = \frac{3}{2} \alpha\\
    \therefore \alpha &= \frac{2}{3} \bm{e}^1\cdot\bm{SP} = \frac{2}{3} \nu^1 =\frac{\sqrt{2}}{3}(\frac{1}{2} -s_1 +s_2 -s_3 -s_4).\\
\end{split}
\end{equation}
$\beta$, $\gamma$ can be obtained in a similar manner,
\begin{equation}
\begin{split}
    \bm{e}^2 \cdot \bm{SP} &= \alpha\bm{e}^2\cdot\bm{e}^1 + \beta\bm{e}^2\cdot\bm{e}_2 + \gamma\bm{e}^2\cdot\bm{e}_3\\
    & = \beta-\frac{1}{2}\alpha\\
    \therefore \beta &=  \frac{1}{2}\alpha + v^2 = \frac{\sqrt{2}}{3}(1-2s_1-s_2+s_3-2s_4),\\
\end{split}
\end{equation}
and
\begin{equation}
\begin{split}
    \bm{e}^3 \cdot \bm{SP} &= \alpha\bm{e}^3\cdot\bm{e}^1 + \beta\bm{e}^3\cdot\bm{e}_2 + \gamma\bm{e}^3\cdot\bm{e}_3\\
    &= \gamma - \frac{1}{2}\alpha\\
    \therefore \gamma &= \frac{1}{2}\alpha +v^3  = \frac{\sqrt{2}}{3}(1-2s_1-s_2-2s_3+s_4).\\
\end{split}
\end{equation}
Then, the coordinate of the worst case point $\bm{SP'}$ is
\begin{equation}
\begin{split}
    \bm{SP'} &= \beta\bm{e}_2 + \gamma\bm{e}_3\\
    &= \frac{\sqrt{2}}{3}(1-2s_1-s_2+s_3-2s_4) \cdot \frac{1}{\sqrt{2}}
    \begin{pmatrix}
        1\\
        0\\
        1\\
    \end{pmatrix}
    +\frac{\sqrt{2}}{3}(1-2s_1-s_2-2s_3+s_4) \cdot \frac{1}{\sqrt{2}}
    \begin{pmatrix}
        1\\
        1\\
        0\\
    \end{pmatrix}\\
    &= \frac{1}{3}
    \begin{pmatrix}
        2 -4s_1 -2s_2 -s_3 -s_4\\
        1 -2s_1 -s_2 -2s_3 +s_4\\
        1 -2s_1 -s_2 +s_3 -2s_4\\
    \end{pmatrix}.
\end{split}
\end{equation}
Finally, returning to the original coordinate system where $(0,0,0)^\mathrm{T}$ is the origin, the
expression of point $\bm{P'}$ is obtained as

\begin{equation}
\begin{split}
    \bm{P'} &= \bm{SP'} + \bm{S}
    = \frac{1}{3}
    \begin{pmatrix}
        2 -4s_1 -2s_2 -s_3 -s_4\\
        1 -2s_1 -s_2 -2s_3 +s_4\\
        1 -2s_1 -s_2 +s_3 -2s_4\\
    \end{pmatrix}
    +
    \begin{pmatrix}
        s_1+s_2\\
        s_1+s_3\\
        s_1+s_4\\
    \end{pmatrix}\\
    &= \frac{1}{3}
     \begin{pmatrix}
        2 -s_1 +s_2 -s_3 -s_4\\
        1 +s_1 -s_2 +s_3 +s_4\\
        1 +s_1 -s_2 +s_3 +s_4\\
    \end{pmatrix}.
\end{split}
\end{equation}

In summary, the worst case $s_5, s_6, s_7$ defined by the point on the face of $T_{\rm pyramid}$ is analytically obtained as
\begin{equation}
\label{solutione^1}
    \bm{P'}
    = \frac{1}{3}
     \begin{pmatrix}
        2 -s_1 +s_2 -s_3 -s_4\\
        1 +s_1 -s_2 +s_3 +s_4\\
        1 +s_1 -s_2 +s_3 +s_4\\
    \end{pmatrix},
\end{equation}
given that $\alpha\geq0, \beta<0, \gamma<0$ is satisfied, or in terms of $s_i$,
\begin{align}
    \frac{1}{2} -s_1 +s_2 -s_3 -s_4 &\geq 0,\label{surfacecond1}\\
    1-2s_1-s_2+s_3-2s_4 &< 0,\label{surfacecond2}\\
    1-2s_1-s_2-2s_3+s_4 &< 0,\label{surfacecond3}
\end{align}
namely,
\begin{align}
s_1-s_2+s_3+s_4 &\le \frac{1}{2},\\
1-2s_1-s_2 &< \min\{(2s_3-s_4),(-s_3+2s_4)\}. 
\end{align}

\subsection{ Case 6: Face $e^2$}
\label{Sec:e^2}

The analysis for Face $\bm e^2$ is completely analogous to that for Face $\bm e^1$.
By exchanging
\[
(\bm e^1, \bm e_2)\leftrightarrow (\bm e^2, \bm e_1),
\]
the worst-case point is obtained as
\begin{equation}
    \label{solutione^2}
    \bm{P'}
    = \frac{1}{3}
     \begin{pmatrix}
        1+s_1+s_2-s_3+s_4\\
        2 -s_1 -s_2 +s_3 -s_4\\
        1+s_1+s_2-s_3+s_4\\
    \end{pmatrix},
\end{equation}
provided that 
\begin{align}
    \frac{1}{2} -s_1 -s_2 +s_3 -s_4 &\geq 0,\label{e2surfacecond1}\\
    1-2s_1+s_2-s_3-2s_4 &< 0,\label{e2surfacecond2}\\
    1-2s_1-2s_2-s_3+s_4 &< 0,\label{e2surfacecond3}
\end{align}
namely, 
\begin{align}
s_1+s_2-s_3+s_4 &\le \frac{1}{2},\\
1-2s_1-s_3 &< \min\{(-s_2+2s_4),(2s_2-s_4)\}. 
\end{align}

\subsection{ Case 7: Face $e^3$}
\label{Sec:e^3}

The analysis for Face $\bm e^3$ is also completely analogous to that for Face $\bm e^1$.
By exchanging
\[
(\bm e^1, \bm e_3)\leftrightarrow (\bm e^3, \bm e_1),
\]
the worst-case point is obtained as
\begin{equation}
    \label{solutione^3}
    \bm{P'}
    = \frac{1}{3}
     \begin{pmatrix}
        1+s_1+s_2+s_3-s_4\\
        1+s_1+s_2+s_3-s_4\\
        2-s_1-s_2-s_3+s_4\\
    \end{pmatrix},
\end{equation}
provided that
\begin{align}
    \frac{1}{2} -s_1 -s_2 -s_3 +s_4 &\geq 0,\label{e3surfacecond1}\\
    1-2s_1+s_2-2s_3-s_4 &< 0,\label{e3surfacecond2}\\
    1-2s_1-2s_2+s_3-s_4 &< 0,\label{e3surfacecond3}
\end{align}
namely,
\begin{align}
s_1+s_2+s_3-s_4 &\le \frac{1}{2},\\
1-2s_1-s_4 &< \min\{(-s_2+2s_3),(2s_2-s_3)\}.
\end{align}

\subsection{ Case 8: Within the pyramid  }
\label{Sec:within}

We consider the case where the worst-case point lies within $T_{\rm pyramid}$. In this case, the point $\bm{P} = (1/2,1/2,1/2)^{\mathrm{T}}$
becomes the realizable worst case.
In the coordinate space where $\bm{S}$ is the origin,
this is defined in terms of the aforementioned vectors as
\begin{equation}
    \exists v^1<0, v^2<0, v^3<0~~{\rm s.t.}~~ \bm{SP} = v^1\bm{e}_1 +v^2\bm{e}_2 + v^3\bm{e}_3.
\end{equation}
Then, the worst-case point is $\bm{P}$,
\begin{equation}
    \label{solutionwithin}
    \bm{P'} = \bm{P} =
    \begin{pmatrix}
        1/2\\
        1/2\\
        1/2\\
    \end{pmatrix},
\end{equation}
given that $v^1<0, v^2<0, v^3<0$, or in terms of $s_i$,

\begin{align}
    \frac{1}{2}-s_1+s_2-s_3-s_4&<0,\label{incond1}\\
    \frac{1}{2}-s_1-s_2+s_3-s_4&<0,\label{incond2}\\
    \frac{1}{2}-s_1-s_2-s_3+s_4&<0,\label{incond3}
\end{align}
namely, 
\be
\frac{1}{2}-s_1 < \min\{(-s_2+s_3+s_4), (s_2-s_3+s_4), (s_2+s_3-s_4)\}.
\ee

\subsection{Summary of the eight cases}
In summary, we have shown that there exist eight analytical solutions, depending on the conditions satisfied by $s_1, s_2, s_3, s_4$. These solutions are derived so as to satisfy the three constraints, Eqs.~(\ref{zero2}), (\ref{zero4}) and (\ref{zero6}), among the eight physical constraints. In Appendix~\ref{Appe:constraints}, we show that they also satisfy the other five physical constraints. 
The analytical solutions and their corresponding conditions are summarized in Table~\ref{Table:solutions}. 
Finally, we note that, from a more unified perspective, the solutions and conditions can be described as follows:

{\it 
We define $U=\{1,2,3\}$ and denote its power set by $2^U$.
Let $V\in 2^U$ be the subset satisfying
\begin{equation}
    \forall i\in V,\ c^i<0,
    \qquad
    \forall j\in U\setminus V,\ c_j\ge 0,
\end{equation}
where
\begin{equation}
    \bm{SP}
    =
    \sum_{i\in V} c^i\bm e_i
    +
    \sum_{j\in U\setminus V} c_j\bm e^j .
\end{equation}
Then $V$ corresponds to one of the eight cases, and the worst point $\bm P'$ is given by
\begin{equation}
    \bm{SP'}
    =
    \sum_{i\in V} c^i \bm e_i .
\end{equation}
}

\section{Proof that worst-case solutions satisfy physical constraints Eqs.~(\ref{zero1}), (\ref{zero3}), (\ref{zero5}), (\ref{zero7}) and (\ref{zero8})} \label{Appe:constraints}
In this appendix, we prove that the eight analytical solutions obtained under the constraints Eqs.~(\ref{zero2}), (\ref{zero4}) and (\ref{zero6}) in Sec.~\ref{subsection:analyticalsolutions} and Appendix~\ref{Appe:othercases} satisfy the remaining physical constraints given in Eqs.~(\ref{zero1}), (\ref{zero3}), (\ref{zero5}), (\ref{zero7}) and (\ref{zero8}). 
Note that the point $\bm{P}=(1/2,1/2,1/2)$ satisfies all of these constraints.  
However, it is not obvious that the point $\bm{P'}$, defined as the closest point to $\bm{P}$ within the pyramid $T_{\rm pyramid}$ specified by Eqs.~(\ref{zero2}), (\ref{zero4}) and (\ref{zero6}), also satisfies these constraints. Therefore, we verify this explicitly for each of the eight cases.

\subsection{Case 1: Corner}
\label{satisfyingotherconstraintscorner}

We consider the case where the solution is given by the corner of the pyramid $\bm{S}$, corresponding to the case discussed in Sec.~\ref{subsection:analyticalsolutions}.  

For Eq.~(\ref{zero1}), substitution of the solution in Eq.~(\ref{solutioncorner}) to the left-hand side gives
\begin{equation}
\label{corners4s6minuss7}
\begin{split}
    s_5+s_6-s_7 &= s_1 +s_2 +s_1 +s_3 -s_1 -s_4\\
    &= s_1 +s_2 +s_3 -s_4.
\end{split}
\end{equation}
Then, Eq.~(\ref{zero1}) can be written in terms of $s_1, s_2, s_3, s_4$ as
\begin{equation}
\label{cornerzero1ins1235}
\begin{split}
    &s_1-s_2-s_3+s_4 \leq s_1 +s_2 +s_3 -s_4\\
    &\Longleftrightarrow s_1 -(s_2+s_3-s_4) \leq s_1 +(s_2 +s_3 -s_4),\\
\end{split}
\end{equation}
which is equivalent to the relation shown in Eq.~(\ref{bound1}) and is always satisfied
given that Eq.~(\ref{bound1_s1235}) is satisfied.

For Eq.~(\ref{zero3}), substitution of the solution to the left-hand side gives
\begin{equation}
\label{corners4minuss6s7}
\begin{split}
    s_5-s_6+s_7 &= s_1 +s_2 -s_1 -s_3 +s_1 +s_4\\
    &= s_1 +s_2 -s_3 +s_4.
\end{split}
\end{equation}
Then, Eq.~(\ref{zero3}) can be written in terms of $s_1, s_2, s_3, s_4$ as
\begin{equation}
\label{cornerzero3ins1235}
\begin{split}
    &s_1-s_2+s_3-s_4 \leq s_1 +s_2 -s_3 +s_4\\
    &\Longleftrightarrow s_1 -(s_2-s_3+s_4) \leq s_1 +(s_2 -s_3 +s_4),\\
\end{split}
\end{equation}
which is equivalent to the relation shown in Eq.~(\ref{bound2}) and is always satisfied
given that Eq.~(\ref{bound2_s1235}) is satisfied.

For Eq.~(\ref{zero5}), substitution of the solution to the left-hand side gives
\begin{equation}
\label{cornerminuss4s6s7}
\begin{split}
    -s_5+s_6+s_7 &= -s_1 -s_2 +s_1 +s_3 +s_1 +s_4\\
    &= s_1 -s_2 +s_3 +s_4.
\end{split}
\end{equation}
Then, Eq.~(\ref{zero5}) can be written in terms of $s_1, s_2, s_3, s_4$ as
\begin{equation}
\label{cornerzero5ins1235}
\begin{split}
    &s_1+s_2-s_3-s_4 \leq s_1 -s_2 +s_3 +s_4\\
    &\Longleftrightarrow s_1 -(-s_2+s_3+s_4) \leq s_1 +(-s_2 +s_3 +s_4),\\
\end{split}
\end{equation}
which is equivalent to the relation shown in Eq.~(\ref{bound3}) and is always satisfied
given that Eq.~(\ref{bound3_s1235}) is satisfied.

The remaining constraints are Eq.~(\ref{zero7}) and Eq.~(\ref{zero8}).
Substitution of the solution to $s_5 +s_6 +s_7$ gives
\begin{equation}
\label{subcorner1}
\begin{split}
    s_5+s_6+s_7 &= s_1 +s_2 +s_1 +s_3 +s_1 +s_4\\
    &= 3s_1 +s_2 +s_3 +s_4.
\end{split}
\end{equation}

For Eq.~(\ref{zero7}), substitution of the solution gives the constraint to satisfy in terms of
$s_1, s_2, s_3, s_4$ as
\begin{equation}
\label{corners1-5constraints1}
\begin{split}
    &-s_1 + s_2 + s_3 + s_4 \leq 3s_1 +s_2 +s_3 +s_4\\
    &\Longleftrightarrow 0 \leq s_1.\\
\end{split}
\end{equation}
which is satisfied by definition.

For Eq.~(\ref{zero8}), substitution of the solution to the left-hand side gives
\begin{equation}
\label{corners1-5constraints2}
\begin{split}
    &3s_1 +s_2 +s_3 +s_4\leq 4 - (s_1+s_2+s_3+s_4)\\
    &\Longleftrightarrow(2s_1+s_2+s_3)+s_4\leq 2.\\
\end{split}
\end{equation}
Given the bound $2s_1+s_2+s_3\leq1$ from Eq.~(\ref{cornercond3}), $(2s_1+s_2+s_3)+s_4\leq1+s_4<2$ for $s_4<1/2$.

Therefore, all physical constraints
Eqs.~(\ref{zero1})--(\ref{zero8}) are always satisfied for the analytical solution in the regime that we consider.

\subsection{Case 2-4: Edge $e_1$, $e_2$, $e_3$}

We first consider the case where the solution is on the edge specified by the vector $\bm{e}_1$, corresponding to Case 2 discussed in Sec.~\ref{Sec:e_1}.  

For Eq.~(\ref{zero1}), substitution of the solution in Eq.~(\ref{solutione_1}) to the left-hand side gives
\begin{equation}
\label{edges4s6minuss7}
\begin{split}
    s_5+s_6-s_7 &= s_1 + s_2 +\frac{1}{2}(1+s_3-s_4)  -\frac{1}{2}(1-s_3+s_4)\\
    &= s_1 +s_2 +s_3 -s_4.
\end{split}
\end{equation}
Then, Eq.~(\ref{zero1}) can be written in terms of $s_1, s_2, s_3, s_4$ as
\begin{equation}
\label{edgezero1ins1235}
\begin{split}
    &s_1-s_2-s_3+s_4 \leq s_1 +s_2 +s_3 -s_4\\
    &\Longleftrightarrow s_1 -(s_2+s_3-s_4) \leq s_1 +(s_2 +s_3 -s_4),\\
\end{split}
\end{equation}
which is equivalent to the relation shown in Eq.~(\ref{bound1}) and is always satisfied
given that Eq.~(\ref{bound1_s1235}) is satisfied.

For Eq.~(\ref{zero3}), substitution of the solution to the left-hand side gives

\begin{equation}
\label{edges4minuss6s7}
\begin{split}
    s_5-s_6+s_7 &= s_1 + s_2 -\frac{1}{2}(1+s_3-s_4)  +\frac{1}{2}(1-s_3+s_4)\\
    &= s_1 +s_2 -s_3 +s_4.
\end{split}
\end{equation}
Then, Eq.~(\ref{zero3}) can be written in terms of $s_1, s_2, s_3, s_4$ as
\begin{equation}
\label{edgezero3ins1235}
\begin{split}
    &s_1-s_2+s_3-s_4 \leq s_1 +s_2 -s_3 +s_4\\
    &\Longleftrightarrow s_1 -(s_2-s_3+s_4) \leq s_1 +(s_2 -s_3 +s_4),\\
\end{split}
\end{equation}
which is equivalent to the relation shown in Eq.~(\ref{bound2}) and is always satisfied
given that Eq.~(\ref{bound2_s1235}) is satisfied.

For Eq.~(\ref{zero5}), substitution of the solution to the left-hand side gives
\begin{equation}
\label{edgeminuss4s6s7}
\begin{split}
    -s_5+s_6+s_7 &= -s_1 - s_2 +\frac{1}{2}(1+s_3-s_4)  +\frac{1}{2}(1-s_3+s_4)\\
    &= 1-s_1 -s_2.
\end{split}
\end{equation}
Then, Eq.~(\ref{zero5}) can be written in terms of $s_1, s_2, s_3, s_4$ as
\begin{equation}
\label{edgezero5ins1235}
\begin{split}
    &s_1 +s_2 -s_3 -s_4 \leq 1 -s_1 -s_2\\
    &\Longleftrightarrow 2s_1 +2s_2 -s_3 -s_4 \leq 1\\
    &\Longleftrightarrow (2s_1 +2s_2 -s_3 +s_4) -2s_4 \leq 1.\\
\end{split}
\end{equation}
From Eq.~(\ref{edgecond2}), the bracketed expression in the last line
of Eq.~(\ref{edgezero5ins1235}) is bounded by $2s_1 +2s_2 -s_3 +s_4\leq1$, so  
Eq.~(\ref{edgezero5ins1235}) is always satisfied for $0\leq s_4$.

The remaining constraints are Eq.~(\ref{zero7}) and Eq.~(\ref{zero8}).
Substitution of the solution to $s_5 +s_6 +s_7$ gives
\begin{equation}
\label{subedge1}
\begin{split}
    s_5+s_6+s_7 &= +s_1 + s_2 +\frac{1}{2}(1+s_3-s_4)  +\frac{1}{2}(1-s_3+s_4)\\
    &=1+s_1+s_2.
\end{split}
\end{equation}

For Eq.~(\ref{zero7}), substitution of Eq.~(\ref{subedge1}) gives the constraint to be satisfied in terms of
$s_1, s_2, s_3, s_4$ as
\begin{equation}
\label{edges1-5constraints1}
\begin{split}
    &-s_1 + s_2 + s_3 + s_4 \leq 1+s_1+s_2\\
    &\Longleftrightarrow -2s_1 + s_3 + s_4 \leq 1.\\
\end{split}
\end{equation}
In the region $0\leq s_1, s_3, s_4<1/2$, we have $-2s_1 + s_3 + s_4<1$, and thus 
the above equation is always satisfied.

For Eq.~(\ref{zero8}), substitution of the solution to the left-hand side gives
\begin{equation}
\begin{split}
    &1+s_1+s_2\leq 4 - (s_1+s_2+s_3+s_4)\\
    &\Longleftrightarrow2s_1+2s_2+s_3+s_4\leq 3\\
\end{split}
\end{equation}
In the region $0\leq s_1, s_2, s_3, s_4<1/2$, we have $2s_1+2s_2+s_3+s_4<3$, and thus 
the above equation is always satisfied.

Therefore, all physical constraints
Eqs.~(\ref{zero1})--(\ref{zero8}) are always satisfied for the analytical solution in the regime that we consider.

By symmetry of the constraints, the verification that the solutions on edges $\bm e_2$ (Case 3) and $\bm e_3$ (Case 4) satisfy all constraints reduces to that of $\bm e_1$ (Case 2) under the exchanges $s_2 \leftrightarrow s_3$ and $s_2 \leftrightarrow s_4$, respectively. Therefore, all constraints are satisfied in the same manner.

\subsection{Case 5-7: Face $e^1$, $e^2$, $e^3$} 
We consider the case where the solution is on the face specified by the vector $\bm{e}^1$, corresponding to Case 5 discussed in Sec.~\ref{Sec:e^1}. 

For Eq.~(\ref{zero1}), substitution of the solution in Eq.~(\ref{solutione^1}) to the left-hand side gives
\begin{equation}
\label{subs4s6minuss7}
\begin{split}
    s_5+s_6-s_7 &= \frac{1}{3} (2 -s_1 +s_2 -s_3 -s_4 +1 +s_1 -s_2 +s_3 +s_4 -1 -s_1 +s_2 -s_3 -s_4)\\
    &= \frac{1}{3}(2 -s_1 +s_2 -s_3 -s_4).
\end{split}
\end{equation}
Then, Eq.~(\ref{zero1}) can be written in terms of $s_1, s_2, s_3, s_4$ as
\begin{equation}
\label{zero1ins1235}
\begin{split}
    &s_1-s_2-s_3+s_4 \leq \frac{1}{3}(2 -s_1 +s_2 -s_3 -s_4)\\
    &\Longleftrightarrow 2s_1-2s_2-s_3+2s_4 \leq 1\\
    &\Longleftrightarrow (2s_1-2s_2+2s_3+2s_4) -3s_3 \leq 1.\\
\end{split}
\end{equation}
From Eq.~(\ref{surfacecond1}), the bracketed expression in the last line
of Eq.~(\ref{zero1ins1235}) is bounded by $2s_1-2s_2+2s_3+2s_4\leq1$, so  
Eq.~(\ref{zero1ins1235}) is always satisfied for $0\leq s_3$.

For Eq.~(\ref{zero3}), substitution of the solution to the left-hand side gives
the same expression as Eq.~(\ref{subs4s6minuss7}) as $s_6=s_7$ holds for the current solution,  
\begin{equation}
\label{subs4minuss67}
\begin{split}
    s_5-s_6+s_7 &= \frac{1}{3} (2 -s_1 +s_2 -s_3 -s_4 -1 -s_1 +s_2 -s_3 -s_4 +1 +s_1 -s_2 +s_3 +s_4)\\
    &= \frac{1}{3}(2 -s_1 +s_2 -s_3 -s_4).
\end{split}
\end{equation}
Then, Eq.~(\ref{zero3}) can be written in terms of $s_1, s_2, s_3, s_4$ as
\begin{equation}
\label{zero3ins1235}
\begin{split}
    &s_1-s_2+s_3-s_4 \leq \frac{1}{3}(2 -s_1 +s_2 -s_3 -s_4)\\
    &\Longleftrightarrow 2s_1-2s_2+2s_3-s_4 \leq 1.\\
    &\Longleftrightarrow (2s_1-2s_2+2s_3+2s_4) -3s_4 \leq 1.\\
\end{split}
\end{equation}
Similar to the previous argument, the bracketed expression in the last line
of Eq.~(\ref{zero3ins1235}) is bounded by $2s_1-2s_2+2s_3+2s_4\leq1$ from Eq.~(\ref{surfacecond1}),
so Eq.~(\ref{zero3ins1235}) is always satisfied for $0\leq s_4$.

For Eq.~(\ref{zero5}), substitution of the solution to the left-hand side gives
\begin{equation}
\label{minuss4s6s7}
\begin{split}
    -s_5+s_6+s_7 &= \frac{1}{3} (-2 +s_1 -s_2 +s_3 +s_4 +1 +s_1 -s_2 +s_3 +s_4 +1 +s_1 -s_2 +s_3 +s_4)\\
    &=s_1 -s_2 +s_3 +s_4.
\end{split}
\end{equation}
Then, Eq.~(\ref{zero5}) can be written in terms of $s_1, s_2, s_3, s_4$ as
\begin{equation}
\label{zero5ins1235}
\begin{split}
    &s_1 +s_2 -s_3 -s_4 \leq s_1 -s_2 +s_3 +s_4\\
    &\Longleftrightarrow s_1 -(-s_2 +s_3 +s_4) \leq s_1 + (-s_2 +s_3 +s_4),\\
\end{split}
\end{equation}
which is equivalent to the relation shown in Eq.~(\ref{bound3}) and is always satisfied
given that Eq.~(\ref{bound3_s1235}) is satisfied.

The remaining constraints are Eq.~(\ref{zero7}) and Eq.~(\ref{zero8}).
Substitution of the solution to $s_5 +s_6 +s_7$ is performed, yielding
\begin{equation}
\label{subsurface1}
\begin{split}
    s_5+s_6+s_7 &= \frac{1}{3} (2 -s_1 +s_2 -s_3 -s_4 +1 +s_1 -s_2 +s_3 +s_4 +1 +s_1 -s_2 +s_3 +s_4)\\
    &=\frac{1}{3} (4 +s_1 -s_2 +s_3 +s_4).
\end{split}
\end{equation}

For Eq.~(\ref{zero7}), substitution of Eq.~(\ref{subsurface1}) yields the constraint to satisfy in terms of
$s_1, s_2, s_3, s_4$ as
\begin{equation}
\label{s1-5constraints1}
\begin{split}
    &-s_1 + s_2 + s_3 + s_4 \leq \frac{1}{3} (4 +s_1 -s_2 +s_3 +s_4)\\
    &\Longleftrightarrow -4s_1 + 4s_2 + 2s_3 + 2s_4 \leq 4.\\
\end{split}
\end{equation}
In the region $0\leq s_1, s_2, s_3, s_4<1/2$, we have $-4s_1 + 4s_2 + 2s_3 + 2s_4<4$, and thus 
the above equation is always satisfied.

For Eq.~(\ref{zero8}), substitution of Eq.~(\ref{subsurface1}) yields the following constraint as
\begin{equation}
\begin{split}
    &\frac{1}{3} (4 +s_1 -s_2 +s_3 +s_4) \leq 4 - (s_1 + s_2 + s_3 + s_4)\\
    &\Longleftrightarrow 2s_1  + s_2 +2s_3 +2s_4 \leq 4,
\end{split}
\end{equation}
In the region of $0\leq s_1, s_2, s_3, s_4<1/2$, we have $2s_1 + s_2 +2s_3 +2s_4<7/2$, and thus
the above equation is always satisfied.

Therefore, all physical constraints
Eqs.~(\ref{zero1})--(\ref{zero8}) are always satisfied for the analytical solution in the regime that we consider.

By symmetry of the constraints, the verification that the solutions on faces $\bm e^2$ (Case 6) and $\bm e^3$ (Case 7) satisfy all constraints reduces to that of face $\bm e^1$ (Case 5) under the exchanges $s_2 \leftrightarrow s_3$ and $s_2 \leftrightarrow s_4$, respectively. Therefore, all constraints are satisfied in the same manner.

\subsection{Case 8: Within the pyramid}

We consider the case where the solution lies within the pyramid $T_{\rm pyramid}$ and is given by $\bm{P}$, corresponding to the case discussed in Sec.~\ref{Sec:within}.  

For Eq.~(\ref{zero1}), substitution of the solution in Eq.~(\ref{solutionwithin}) allows the constraint to be written
in terms of $s_1, s_2, s_3, s_4$ as
\begin{equation}
\label{inzero1ins1235}
    s_1-s_2-s_3+s_4 \leq \frac{1}{2}.
\end{equation}
From Eq.~(\ref{incond3}), we have $-s_1-s_2-s_3+s_4<-1/2$, 
which leads to 
\begin{equation}
    s_1-s_2-s_3+s_4 < -\frac{1}{2} + 2s_1, 
\end{equation}
so Eq.~(\ref{inzero1ins1235}) is always satisfied in the region $0\leq s_1<1/2$.
Similar discussions can be made for Eq.~(\ref{zero3}), Eq.~(\ref{zero5}).

For Eq.~(\ref{zero3}), substitution of the solution allows the constraint to be written
in terms of $s_1, s_2, s_3, s_4$ as
\begin{equation}
\label{inzero3ins1235}
    s_1-s_2+s_3-s_4 \leq \frac{1}{2}.
\end{equation}
From Eq.~(\ref{incond2}), we have $-s_1-s_2+s_3-s_4<-1/2$, 
which leads to 
\begin{equation}
    s_1-s_2+s_3-s_4 < -\frac{1}{2} + 2s_1, 
\end{equation}
so Eq.~(\ref{inzero3ins1235}) is always satisfied in the region $0\leq s_1<1/2$.

For Eq.~(\ref{zero5}), substitution of the solution allows the constraint to be written
in terms of $s_1, s_2, s_3, s_4$ as
\begin{equation}
\label{inzero5ins1235}
    s_1+s_2-s_3-s_4 \leq \frac{1}{2}.
\end{equation}
From Eq.~(\ref{incond1}), we have $-s_1+s_2-s_3-s_4<-1/2$, 
which leads to 
\begin{equation}
    s_1+s_2-s_3-s_4 < -\frac{1}{2} + 2s_1, 
\end{equation}
so Eq.~(\ref{inzero5ins1235}) is always satisfied in the region $0\leq s_1<1/2$.

The remaining constraints to check are Eq.~(\ref{zero7}) and Eq.~(\ref{zero8}).

For Eq.~(\ref{zero7}), substitution of the solution gives the constraint to satisfy in terms of
$s_1, s_2, s_3, s_4$ as
\begin{equation}
\label{ins1-5constraints1}
    -s_1 + s_2 + s_3 + s_4 \leq \frac{3}{2},
\end{equation}
which is always satisfied in the region $s_1, s_2, s_3, s_4<1/2$.

For Eq.~(\ref{zero8}), substitution of the solution gives the constraint to satisfy in terms of
$s_1, s_2, s_3, s_4$ as
\begin{equation}
\label{ins1-5constraints2}
\begin{split}
    &\frac{3}{2}\leq 4 - (s_1+s_2+s_3+s_4)\\
    &\Longleftrightarrow s_1+s_2+s_3+s_4\leq \frac{5}{2},\\
\end{split}
\end{equation}
which is always satisfied in the region $s_1, s_2, s_3, s_4<1/2$.
 
Therefore, all physical constraints
Eqs.~(\ref{zero1})--(\ref{zero8}) are always satisfied for the analytical solution in the regime that we consider.


\bibliography{bib_2way}
\bibliographystyle{apsrev4-2}

\end{document}